\documentclass[orivec]{llncs}

\usepackage{hyperref}
\pagestyle{plain}
\usepackage{eso-pic}
\AddToShipoutPicture*{
\put(133,110){
\scriptsize
Proceedings of \href{http://arxiv.org/abs/1406.1510}{$11^\text{th}$ Workshop on Constraint Handling Rules (CHR 2014)}, 
paper CHR/2014/3.
}}

\usepackage{mathtools}
\usepackage{amsmath}
\usepackage{cdsty}
\usepackage{stmaryrd}
\usepackage{amssymb}
\usepackage{hyperref}

\newcommand{\sgap}{\quad}
\newcommand{\bgap}{\quad\quad}

\newcommand{\atsign}{@}

\newcommand{\myirule}[3][]
{{\renewcommand{\arraystretch}{1.2}
  \begin{array}{c} #2
  \\ \hline #3
  \end{array}}\;{\scriptstyle #1}
}
\def\ruleform#1{{\setlength{\fboxrule}{0.1pt}\fbox{\normalsize $#1$}}}

\newcommand{\tx}[1]{\texttt{#1}}

\newcommand{\mmid}{~\mid~}

\newcommand{\fig}[3]
        {\begin{figure*}[t]#3\
        \caption{\label{#1}#2}\ 
        \vspace{-5mm}
         \end{figure*} }

\newcommand{\prog}{{\cal P}}

\newcommand{\mt}[1]{\mathit{#1}}

\newcommand{\mr}[1]{\mathrm{#1}}

\newcommand{\subst}[2]{[#2/#1]}

\newcommand{\freevars}[1]{\mt{FV}(#1)}

\newcommand{\comment}[1]{}

\newcommand{\guardjudge}[1]{\models #1}
\newcommand{\notguardjudge}[1]{\not \models #1}

\newcommand{\mset}[1]{{\lbag #1 \rbag}}
\newcommand{\msetcomma}{,}

\newcommand{\msetempty}{\emptyset}

\newcommand{\termalpha}{t_{\alpha}}
\newcommand{\guardalpha}{g_{\alpha}}

\newcommand{\mirule}[1]{{\bf l}_{\mt{#1}}}
\newcommand{\rmirule}[1]{{\bf l}^{\neg}_{\mt{#1}}}
\newcommand{\ruirule}[1]{{\bf u}^{\neg}_{\mt{#1}}}

\newcommand{\rhsirule}[1]{{\bf r}_{\mt{#1}}}

\newcommand{\mysection}[1]{\vspace*{-3mm}\section{#1}\vspace*{-2mm}}


\newcommand{\andcompre}[2]{\bigwedge_{#1} \lbag #2 \rbag}
\newcommand{\mcompre}[3]{{\lbag #1 ~\tx{|}~ #2 \rbag_{#3}}}
\newcommand{\ngmcompre}[2]{{\lbag #1 \rbag_{#2}}}
\newcommand{\unidis}{\in}

\newcommand{\chrrule}[5]{ #1 ~@~ #2 ~\backslash~ #3 \Longleftrightarrow #4 ~\tx{|}~ #5 }

\newcommand{\abstrans}{\mapsto_{\alpha}}
\newcommand{\abstransstar}{\mapsto^*_{\alpha}}

\newcommand{\optrans}{\mapsto_{\omega}}

\newcommand{\subsume}[2]{#1 \sqsubseteq_{{\bf lhs}} #2}
\newcommand{\notsubsume}[2]{#1 \not \sqsubseteq_{{\bf lhs}} #2}

\newcommand{\unfsubsume}[3]{#1 \rhd #2 \sqsubseteq_{{\bf unf}} #3}
\newcommand{\notunfsubsume}[3]{#1 \rhd #2 \not \sqsubseteq_{{\bf unf}} #3}
\newcommand{\existunfjudge}[3]{#1 \rhd #2 \triangleq^{\neg}_{{\bf unf}} #3}

\newcommand{\existunfprogjudge}[2]{#1 \triangleq^{\neg}_{{\bf unf}} #2}
\newcommand{\notexistunfprogjudge}[2]{#1 \not \triangleq^{\neg}_{{\bf unf}} #2}

\newcommand{\initgoal}{\tx{init}}
\newcommand{\eagergoal}{\tx{eager}}
\newcommand{\lazygoal}{\tx{lazy}}

\newcommand{\actgoal}{\tx{act}}
\newcommand{\propgoal}{\tx{prop}}

\newcommand{\stackempty}{\epsilon}
\newcommand{\stack}[2]{\tx{[}#1~\tx{|}~#2\tx{]}}
\newcommand{\sstack}[1]{\tx{[}#1\tx{]}}
\newcommand{\stplus}{+}

\newcommand{\st}{\mt{St}}
\newcommand{\gs}{\mt{Gs}}
\newcommand{\lst}{\mt{Ls}}

\newcommand{\droplabels}[1]{\mt{dropLabels}(#1)}
\newcommand{\getlabels}[1]{\mt{getLabels}(#1)}
\newcommand{\newlabels}[2]{\mt{newLabels}(#1,#2)}

\newcommand{\execstate}[2]{{\langle#1~;~#2\rangle}}
\newcommand{\idcons}[2]{#1 \# #2}

\newcommand{\matchjudge}[2]{#1 \triangleq_{{\bf lhs}} #2}
\newcommand{\existmatchjudge}[2]{#1 \triangleq^{\neg}_{{\bf lhs}} #2}
\newcommand{\maximatchjudge}[2]

\newcommand{\rhsjudge}[2]{#1 \ggg_{{\bf rhs}} #2}

\newcommand{\drop}[1]{#1}
\newcommand{\dropidx}[1]{\mt{dropIdx}(#1)}
\newcommand{\getidx}[1]{\mt{getIdx}(#1)}

\newcommand{\lkprog}[2]{#1[#2]}

\newcommand{\lzgs}[1]{\mt{lazyGs}(#1)}
\newcommand{\eggs}[1]{\mt{eagerGs}(#1)}

\newcommand{\occ}[2]{#1:#2}

\newcommand{\pred}[2]{#1\tx{(}#2\tx{)}}

\newcommand{\corresp}[1]{\llceil #1 \rrceil}

\newcommand{\chrcp}{{\mt{CHR}^\mt{cp}}}

\newcommand{\reduce}{{\cal R}}

\newcommand{\ruleabstrans}[3]{ #1 \rhd #2 ~\abstrans~ #3 }
\newcommand{\progabstrans}[3]{ #1 \rhd #2 ~\abstrans^*~ #3 }

\newcommand{\ruleoptrans}[3]{ #1 \rhd #2 ~\optrans~ #3 }
\newcommand{\progoptrans}[3]{ #1 \rhd #2 ~\optrans^*~ #3 }

\newcommand{\wrule}{R_\omega}
\newcommand{\wprog}{{\cal P}_\omega}

\setcounter{topnumber}{1}

\comment{
\newtheorem{theorem}{Theorem}
\newtheorem{lemma}[theorem]{Lemma}

\newenvironment{proof}{{\bf Proof:}~}{$\Box$ \\ }

}

\parskip=0.5ex

\pagestyle{plain}

\title{Constraint Handling Rules with Multiset Comprehension Patterns%
\thanks{This paper was made possible by
      grant NPRP 09-667-1-100, \emph{Effective Programming for Large
      Distributed Ensembles}, from the Qatar National Research Fund (a member
      of the Qatar Foundation). The statements made herein are solely the
      responsibility of the authors.}
}
\author{Edmund S. L. Lam and Iliano Cervesato}

\institute{
 Carnegie Mellon University \\
 \email{sllam@qatar.cmu.edu} and
 \email{iliano@cmu.edu} }

\begin{document}
\maketitle

\bibliographystyle{plain}

\begin{abstract}
  CHR is a declarative, concurrent and committed choice rule-based constraint
  programming language.  We extend CHR with multiset comprehension patterns,
  providing the programmer with the ability to write multiset rewriting rules
  that can match a variable number of constraints in the store. This enables
  writing more readable, concise and declarative code for algorithms that
  coordinate large amounts of data or require aggregate operations. We call
  this extension $\chrcp$. We give a high-level abstract semantics of
  $\chrcp$, followed by a lower-level operational semantics. We then show the
  soundness of this operational semantics with respect to the abstract
  semantics.
\end{abstract}

\section{Introduction}
\label{sec:intro}
 
CHR is a declarative, concurrent and committed choice rule-based constraint
programming language.  CHR rules are executed in a pure forward-chaining
(data-driven) and committed choice (no backtracking) manner, providing the
programmer with a highly expressive programming model to implement complex
programs in a concise and declarative manner.  Yet, programming in a pure
forward-chaining model is not without its shortfalls. Expressive as it is,
when faced with algorithms that operate over a dynamic number of constraints
(e.g., finding the minimum value or finding \emph{all} constraints in the
store matching a particular pattern), a programmer is forced to decompose
his/her code over several rules, as a CHR rule can only match a fixed number
of constraints. Such an approach is tedious, error-prone and leads to repeated
instances of boilerplate codes, suggesting the opportunity for a higher form
of abstraction.

This paper explores an extension of CHR with \emph{multiset comprehension
  patterns}~\cite{cruz-damp12-short,DBLP:conf/iclp/SneyersWSD07}.  These
patterns allow the programmer to write multiset rewriting rules that can match
dynamically-sized constraint sets in the store.  They enable writing more
readable, concise and declarative programs that coordinate large amount of
data or use aggregate operations. We call this extension $\chrcp$.

While defining an abstract semantics that accounts for comprehension patterns
is relatively easy, turning it into an efficient model of computation akin to
the refined operational semantics of CHR~\cite{Duck04therefined} is
challenging.  The problem is that
monotonicity~\cite{DBLP:journals/lncs/Fruhwirth94}, a key requirement for the
kind of incremental processing that underlies CHR's refined operational
semantics, does not hold in the presence of comprehension patterns.  We
address this issue by statically identifying $\chrcp$ constraints that are
monotonic, and limiting incremental processing to just these constraints.
Similarly to~\cite{Duck04therefined}, this approach yields a sound
transformation of the abstract model of computation into an implementable
system.

Altogether, this paper makes the following contributions:
\begin{itemize}
\item%
  We formally define the abstract syntax and abstract semantics of $\chrcp$.
\item%
  We define a notion of conditional monotonicity for $\chrcp$ programs, and
  define an operational semantics that exploits it to drive an efficient
  execution model for $\chrcp$.
\item%
  We prove the soundness of this operational semantics with respect to the
  abstract semantics.
\end{itemize}
The rest of the paper is organized as follows: Section~\ref{sec:eg} introduces
$\chrcp$ by examples and Section~\ref{sec:syntax} formalizes its syntax.
Section~\ref{sec:abs_sem} defines the abstract semantics while
Section~\ref{sec:cond_mono} examines monotonicity in $\chrcp$. In
Section~\ref{sec:op_sem}, we introduce an operational semantics for $\chrcp$
and in Section~\ref{sec:sound}, we prove its soundness with respect to the
abstract semantics.  Section~\ref{sec:related} situates $\chrcp$ in the
literature and Section~\ref{sec:conclude} outlines directions of future work.

\section{Motivating Examples}
\label{sec:eg}

In this section, we illustrate the benefits of comprehension patterns in
multiset rewriting on some examples.  A comprehension pattern 
$\mcompre{\pred{p}{\vec{t}}}{g}{\vec{x} \unidis t}$ represents a multiset 
of constraints that match the atomic constraint $\pred{p}{\vec{t}}$ and 
satisfy guard $g$ under the bindings of variables $\vec{x}$ that range over $t$,
known as the {\em comprehension domain}.

Consider the problem of swapping data among agents based on a pivot value.  We
express an integer datum $D$ belonging to agent $X$ by the constraint
$\mt{data(X,D)}$.  Then, given agents $X$ and $Y$ and pivot value $P$, we want
all of $X$'s data with value greater than or equal to $P$ to be transferred to
$Y$ and all of $Y$'s data less than $P$ to be transferred to $X$.  The
following $\chrcp$ rule implements this pivot swap procedure:%
{\small
\[
\begin{array}{l}
  \mt{pivotSwap} ~\atsign~
  \begin{array}{l}
    \mt{swap(X,Y,P)} \\
    \mt{\mcompre{data(X,D)}{D \geq P}{D \unidis Xs}} \\
    \mt{\mcompre{data(Y,D)}{D < P}{D \unidis Ys}} 
  \end{array}
  ~\Longleftrightarrow~
  \begin{array}{l}
    \mt{\ngmcompre{data(Y,D)}{D \unidis Xs}} \\
    \mt{\ngmcompre{data(X,D)}{D \unidis Ys}} 
  \end{array}
\end{array}
\]}%
The swap is triggered by the constraint $\mt{swap(X,Y,P)}$.  All of $X$'s data
that are greater than or equal to the pivot $P$ are identified by the
comprehension pattern $\mt{\mcompre{data(X,D)}{D \geq P}{D \unidis Xs}}$.
Similarly, all $Y$'s data less than $P$ are identified by
$\mt{\mcompre{data(Y,D)}{D < P}{D \unidis Ys}}$.  The instances of $D$ matched
by each comprehension pattern are accumulated in the comprehension domains
$\mt{Xs}$ and $\mt{Ys}$, respectively.  Finally, these collected bindings are
used in the rule body to complete the rewriting by redistributing all of $X$'s
selected data to $Y$ and vice versa.  The comprehension domains $\mt{Xs}$ and
$\mt{Ys}$ are treated as output variables in the rule head, since the
matches for $D$ are fetched from the store.  In the rule body, comprehension
ranges are input variables, as we construct the desired multisets of
constraints from them.  The $\chrcp$ semantics enforces the property that each
comprehension pattern captures a \emph{maximal multiset} of constraints in the
store, thus guaranteeing that no data that is to be swapped is left behind.

Comprehension patterns allow the programmer to easily write rule patterns that
manipulate dynamic numbers of constraints.  Now consider how the above program
would be written in pure CHR (without comprehension patterns).  To do this,
we are forced to explicitly implement the operation of collecting a multiset
of $\mt{data}$ constraints over several rules. We also need to introduce an
accumulator to store bindings for the matched facts as we retrieve them. A
possible implementation of this nature is as follows:%
{\small
$$
\begin{array}{@{}r@{\;\atsign\;}l@{\;\Longleftrightarrow\;}l@{}}
   \mt{init}
 & \mt{swap(X,Y,P)}
 & \mt{grabGE(X,P,Y,[\,]), grabLT(Y,P,X,[\,])} 
\\[0.5ex]
   \mt{ge1}
 & \mt{grabGE(X,P,Y,Ds), data(X,D)}
 & D \geq P \mid \mt{grabGE(X,P,Y,[D\mid Ds])}
\\ \mt{ge2}
 & \mt{grabGE(X,P,Y,Ds)}
 & \mt{unrollData(Y,Ds)}
\\[0.5ex]
   \mt{lt1}
 & \mt{grabLT(Y,P,X,Ds), data(Y,D)}
 & D < P \mid \mt{grabLT(Y,P,X,[D\mid Ds])}
\\ \mt{lt2}
 & \mt{grabLT(Y,P,X,Ds)}
 & \mt{unrollData(X,Ds)} 
\\[0.5ex]
   \mt{unroll1}
 & \mt{unrollData(L,[D\mid Ds])}
 & \mt{unrollData(L,Ds),data(L,D)}
\\ \mt{unroll2}
 & \mt{unrollData(L,[\,])}
 & \mt{true}
\end{array}
$$
}%
Here, $[\,]$ denotes the empty list and $\mt{[D\mid Ds]}$ constructs a list with the 
head element $D$ and the rest from $\mt{Ds}$.  In a CHR program that consists of several subroutines of
this nature, this boilerplate code gets repeated over and over, making the
program less concise. Furthermore, the use of list accumulators and auxiliary
constraints (e.g., $\mt{grabGE}$, $\mt{unrollData}$) makes the implementation
less readable and more error-prone.  Most importantly, the swap operation as
written in $\chrcp$ is \emph{atomic} while the above CHR code involves many
rewrites, which could be interspersed by applications of other rules that
operate on $\mt{data}$ constraints.

Comprehension patterns also promote a concise way of coding term-level
aggregate computations: using a comprehension pattern's ability to retrieve a
dynamic number of constraints, we can compute aggregates with term-level map
and reduce operations over multisets of terms.  Consider the following
$\chrcp$ rule:%
{\small
\[
\begin{array}{@{}l@{}}
  \mt{removeNonMin} ~\atsign~
\\
\hspace*{1em}
  \begin{array}{l}
     \mt{remove(Gs),}~
     \mt{\mcompre{edge(X,Y,W)}{X \in Gs}{(X,Y,W) \unidis Es}}
  \end{array} \\[0.5ex]
  \bgap \Longleftrightarrow
  \begin{array}{l|}
    \mt{Es \neq \msetempty} \\ 
    \mt{Ws = \ngmcompre{W}{(X,Y,W) \unidis Es}} \\ 
    \mt{W_m = \reduce~\mr{min}~\infty~Ws} \\
    \mt{Rs = \mcompre{(X,Y,W)}{W_m < W}{(X,Y,W) \unidis Es}}~
  \end{array}
  ~
  \mt{\ngmcompre{edge(X,Y,W)}{(X,Y,W) \unidis Rs}}
\\\\[-1.5ex]
  \mr{where} 
  \sgap \mr{min} = \lambda x.~ \lambda y.~ 
                   \mr{if}~x \leq y ~\mr{then}~ x ~\mr{else}~ y
\end{array}
\]
}%
This $\chrcp$ rule identifies the minimum weight $W_m$ from a group $\mt{Gs}$
of edges in a directed graph and deletes all edges in that group with weight
$W_m$. Note that there could be several such minimal edges. We represent an
edge of weight $W$ between nodes $X$ and $Y$ with the constraint
$\mt{edge(X,Y,W)}$.  The fact $\mt{remove(Gs)}$ identifies the group $\mt{Gs}$
whose outgoing edges are the subject of the removal.  The minimum weight $W_m$
is computed by collecting all edges with origin in a node in $\mt{Gs}$
(constraint $\mt{\mcompre{edge(X,Y,W)}{X \in Gs}{(X,Y,W) \unidis Es}}$),
extracting their weight into the multiset $\mt{Ws}$ (with $\mt{Ws =
  \ngmcompre{W}{(X,Y,W) \unidis Es}}$) and folding the binary function
$\mr{min}$ over all of $\mt{Ws}$ by means of the term-level \emph{reduce}
operator $\reduce$ (constraint $\mt{W_m = \reduce~\mr{min}~\infty~Ws}$).  The
multiset $\mt{Rs}$ collects the edges with weight strictly greater than $W_m$
(constraint $\mt{Rs = \mcompre{(X,Y,W)}{W_m < W}{(X,Y,W) \unidis Es}}$).

\section{Syntax}
\label{sec:syntax}

\fig{fig:syntax}{Abstract Syntax of $\chrcp$}{
$$
\begin{array}{c}
      \text{Variables:}~x
\bgap \text{Values:}~v
\bgap \text{Predicates:}~p
\bgap \text{Rule names:}~r
\\    \text{Primitive terms:}~\termalpha
\bgap \text{Primitive guards:}~\guardalpha
\end{array}
$$
$$
\begin{array}{r@{:\hspace{1em}}r@{\;\;\;::=\;\;\;}l}
   \text{\em Terms}
 & t
 & \termalpha \mmid \bar{t} \mmid \mcompre{t}{g}{\vec{x} \unidis t}
\\ \text{\em Guards}
 & g
& \guardalpha \mmid g \wedge g \mmid \andcompre{\vec{x} \unidis t}{g}
\\[1ex]
  \text{\em Atomic Constraints}
 & A
 & \pred{p}{~\vec{t}~}
\\ \text{\em Comprehensions}
 & M
 & \mcompre{A}{g}{\vec{x} \unidis t}
\\ \text{\em Rule Constraints}
 & C,B
 & A \mmid M
\\[1ex]
   \text{\em Rules}
 & R
 & \chrrule{r}{\bar{C}}{\bar{C}}{g}{\bar{C}}
\\ \text{\em Programs}
 & \prog
 & \bar{R}
  \end{array}
$$
\vspace{-3mm}
}

In this section, we define the abstract syntax of $\chrcp$.  We focus on the
core fragment of the $\chrcp$ language, on top of which convenient short-hands
and a ``sugared'' concrete syntax can be built.

Figure~\ref{fig:syntax} defines the abstract syntax of $\chrcp$. Throughout
this paper, we write $\bar{o}$ for a multiset of syntactic object $o$, with
$\varnothing$ indicating the empty multiset.  We write $\mset{\bar{o}_1
  \msetcomma \bar{o}_2}$ for the union of multisets $\bar{o}_1$ and
$\bar{o}_2$, omitting the brackets when no ambiguity arises.  The extension of
multiset $\bar{o}$ with syntactic object $o$ is similarly denoted
$\mset{\bar{o} \msetcomma o}$.  We write $\vec{o}$ for a comma-separated tuple
of $o$'s.

An atomic constraint $\pred{p}{\vec{t}}$ is a predicate symbol $p$ applied to
a tuple $\vec{t}$ of terms.  A comprehension pattern $\mcompre{A}{g}{\vec{x}
  \unidis t}$ represents a multiset of constraints that match the atomic
constraint $A$ and satisfy guard $g$ under the bindings of variables $\vec{x}$ that
range over $t$.  We call $\vec{x}$ the \emph{binding variables} and $t$ the
\emph{comprehension domain}.  The conjunctive comprehension of a multiset of
guards of the form $g$ is denoted by $\andcompre{\vec{x} \unidis t}{g}$. It
represents a conjunction of all instances of guard $g$ under the bindings of
$\vec{x}$ ranging over $t$.  In both forms of comprehension, the variables
$\vec{x}$ are locally bound with scope $g$ (and $A$).

The development of $\chrcp$ is largely agnostic with respect to the language
of terms.  We will assume a base term language $\mathcal{L}_\alpha$, that in
examples contains numbers and functions, but may be far richer.  We write
$\termalpha$ for a generic term in this base language, $\guardalpha$ for an
atomic guard over such terms, and $\models_\alpha$ for the satisfiability
relation over ground guards.  In addition to $\mathcal{L}_\alpha$, $\chrcp$
contains tuples with their standard operators, and a term-level multisets.
Multiset constructors include the empty multiset $\varnothing$, singleton
multisets $\mset{t}$ for any term $t$, and multiset union $\mset{m_1
  \msetcomma m_2}$ for multisets $m_1$ and $m_2$.  Term-level multiset
comprehension $\mcompre{t}{g}{x \unidis m}$ filters multiset $m$ according to
$g$ and maps the result as specified by $t$.  The reduce operator
$\reduce~f~e~m$ recursively combines the elements of multiset $m$ pairwise
according to $f$, returning $e$ for the empty multiset.

As in CHR, a $\chrcp$ rule $\chrrule{r}{\bar{C}_p}{\bar{C}_s}{g}{\bar{B}}$
specifies the rewriting of $\bar{C}_s$ into $\bar{B}$ under the conditions
that constraints $\bar{C}_p$ are available and guards $g$ are satisfied.  As
usual, we refer to $\bar{C}_p$ as the \emph{propagated head}, to $\bar{C}_s$
as the \emph{simplified head} and to $\bar{B}$ as the \emph{body} of the rule.
If the propagated head $\bar{C}_p$ is empty or the guard $g$ is always
satisfied (i.e., $\mt{true}$), we omit the respective rule component entirely.
Rules with an empty simplified head $\bar{C}_s$ are referred to as
\emph{propagation} rules.  All free variables in a $\chrcp$ rule are
implicitly universally quantified at the head of the rule.  We will assume
that a rule's body is grounded by the rule heads and that guards (built-in
constraint) cannot appear in the rule body. This simplifies the discussion,
allowing us to focus on the novelties brought about by comprehension patterns.

\section{Abstract Semantics}
\label{sec:abs_sem}

This section describes the abstract semantics of $\chrcp$. We first define
some meta-notation and operations.  The set of the free variables in a
syntactic object $o$ is denoted $\freevars{o}$.  We write
$\subst{\vec{x}}{\vec{t}}o$ for the simultaneous replacement within object $o$
of all occurrences of variable $x_i$ in $\vec{x}$ with the corresponding term
$t_i$ in $\vec{t}$.  When traversing a binding construct (e.g., comprehension
patterns), substitution implicitly $\alpha$-renames variables to avoid
capture.  It will be convenient to assume that terms get normalized during (or
right after) substitution.

Without loss of generality, we assume that atomic constraints in a $\chrcp$
rule have the form $p(\vec{x})$, including in comprehension patterns.  This
simplified form pushes complex term expressions and computations into the
guard component of the rule or the comprehension pattern.  The satisfiability
of a ground guard $g$ is modeled by the judgment $\guardjudge{g}$; its
negation is written $\notguardjudge{g}$.

The abstract semantics of $\chrcp$ is modeled by the small-step judgment
$\ruleabstrans{\prog}{\st}{\st'}$, which applies a rule in $\chrcp$ program
$\prog$ to constraint store $\st$ producing store $\st'$.  A constraint store
is a multiset of ground atomic constraints.  Applying a rule has two phases:
we match its heads and guard against the current store, and whenever
successful, we replace some of the matched facts with the corresponding instance
of this body.  We will now describe these two phases in isolation and then
come back to rule application.

\fig{fig:matching_sem}{Semantics of Matching in $\chrcp$}{
$\ruleform{\text{Matching:}
     \bgap \matchjudge{\bar{C}}{\st}
     \bgap \matchjudge{C}{\st}}$
$$
\begin{array}{c}
    \myirule[(\mirule{\mt{mset}\tx{-}1})]
      {\matchjudge{\bar{C}}{\st}
       \quad
       \matchjudge{C}{\st'}}
      {\matchjudge{\mset{\bar{C} \msetcomma C}}{\mset{\st \msetcomma \st'}}}
\qquad
    \myirule[(\mirule{\mt{mset}\tx{-}2})]
      {}
      {\matchjudge{\msetempty}{\msetempty}}
\qquad
    \myirule[(\mirule{atom})]
      {}
      {\matchjudge{A}{A}}
\\\\
    \myirule[(\mirule{comp\tx{-}1})]
      {\matchjudge{\subst{\vec{x}}{\vec{t}}A}{A'}
       \quad
       \guardjudge{\subst{\vec{x}}{\vec{t}}g}
       \quad
       \matchjudge{\mcompre{A}{g}{\vec{x} \unidis \mt{ts}}}{\st}}
     {\matchjudge{\mcompre{A}{g}{\vec{x} \unidis \mset{\mt{ts} \msetcomma \vec{t}}}}{\mset{\st \msetcomma A'}}}
\qquad
    \myirule[(\mirule{comp\tx{-}2})]
      {}
      {\matchjudge{\mcompre{A}{g}{\vec{x} \unidis \msetempty}}{\msetempty}}
\end{array}
$$
\medskip

$\ruleform{\text{Residual Non-matching:}
      \bgap \existmatchjudge{\bar{C}}{\st}
      \bgap \existmatchjudge{C}{\st}}$
$$
\begin{array}{c}
    \myirule[(\rmirule{mset\tx{-}1})]
      {\existmatchjudge{\bar{C}}{\st}
       \quad
       \existmatchjudge{C}{\st}}
      {\existmatchjudge{\mset{\bar{C} \msetcomma C}}{\st}}
\qquad
    \myirule[(\rmirule{mset\tx{-}2})]
      {}
      {\existmatchjudge{\msetempty}{\st}}
\\\\
    \myirule[(\rmirule{atom})]
      {}
      {\existmatchjudge{A}{\st}}
\qquad
    \myirule[(\rmirule{comp\tx{-}1})]
      {\notsubsume{A}{M}
       \quad
       \existmatchjudge{M}{\st}}
      {\existmatchjudge{M}{\mset{\st \msetcomma A}}}
\qquad
    \myirule[(\rmirule{comp\tx{-}2})]
      {}
      {\existmatchjudge{M}{\msetempty}}
\end{array}
$$
$\text{Subsumption:}~
  \subsume{A}{\mcompre{A'}{g}{\vec{x} \unidis \mt{ts}}} \sgap \text{iff} \sgap 
  A = \theta A'~\text{and}~\guardjudge{\theta g} ~\text{ for some}~\theta = \subst{\vec{x}}{\vec{t}}$
}

Figure~\ref{fig:matching_sem} defines the matching phase of $\chrcp$.  It
relies on two forms of judgments, each with a variant operating on a multiset
of constraint patterns $\bar{C}$ and a variant on an individual pattern $C$.
The first matching judgment, $\matchjudge{\bar{C}}{\st}$, holds when the
constraints in the store fragment $\st$ match \emph{completely} the multiset
of constraint patterns $\bar{C}$.  It will always be the case that $\bar{C}$
is closed (i.e., $\freevars{\bar{C}} = \emptyset$).  Rules
$(\mirule{mset\tx{-}*})$ iterate rules $(\mirule{atom})$ and
$(\mirule{comp\tx{-}*})$ on $\st$, thereby partitioning it into fragments
matched by these rules.  Rule $(\mirule{atom})$ matches an atomic constraint
$A$ to the singleton store $A$.  Rules $(\mirule{comp\tx{-}*})$ match a
comprehension pattern $\mcompre{A}{g}{\vec{x} \unidis \mt{ts}}$. If the
comprehension domain is empty ($x \unidis \msetempty$), the store must be empty
(rule $\mirule{comp\tx{-}2}$). Otherwise, rule $(\mirule{comp\tx{-}1})$ binds
$\vec{x}$ to an element $\vec{t}$ of the comprehension domain $\mt{ts}$,
matches the instance $\subst{\vec{x}}{\vec{t}}A$ of the pattern $A$ with a
constraint $A'$ in the store if the corresponding guard instance
$\subst{\vec{x}}{\vec{t}}g$ is satisfiable, and continues with the rest of the
comprehension domain.

To guarantee the maximality of comprehension patterns, we test a store for
\emph{residual matchings}.  This relies on the matching subsumption relation
$\subsume{A}{\mcompre{A'}{g}{\vec{x} \unidis \mt{ts}}}$, defined at the bottom
of Figure~\ref{fig:matching_sem}.  This relation holds if $A$ can be absorbed
into the comprehension pattern $\mcompre{A'}{g}{\vec{x} \unidis \mt{ts}}$.
Note that it ignores the available bindings in $\mt{ts}$: $t$ need not be an
element of the comprehension domain.  Its negation is denoted by
$\notsubsume{A}{\mcompre{A'}{g}{\vec{x} \unidis \mt{ts}}}$.  We test a store
for residual matchings using the \emph{residual non-matching judgment}
$\existmatchjudge{\bar{C}}{\st}$. Informally, for each comprehension pattern
$\mcompre{A'}{g}{\vec{x} \unidis \mt{ts}}$ in $\bar{C}$, this judgment checks
that no constraints in $\st$ matches $A'$ satisfying $g$.  This judgment is
defined in the middle section of Figure~\ref{fig:matching_sem}.  Rules
($\rmirule{mset\tx{-}*}$) apply the remaining rules to each constraint
patterns $C$ in $\bar{C}$.  Observe that each pattern $C$ is ultimately
matched against the entire store $\st$.  Rule ($\rmirule{atom}$) asserts that
atoms have no residual matches.  Rules ($\rmirule{comp\tx{-}*}$) check that no
constraints in $\st$ match the comprehension pattern $M =
\mcompre{A'}{g}{\vec{x} \unidis \mt{ts}}$.

\fig{fig:rule_app_judgment}{Abstract Semantics of $\chrcp$}{
$\ruleform{\text{Rule Body:}
     \sgap \rhsjudge{\bar{C}}{\st}
     \sgap \rhsjudge{C}{\st}}$
$$
\begin{array}{c}
    \myirule[(\rhsirule{mset\tx{-}1})]
      {\rhsjudge{\bar{C}}{\st}
       \quad
       \rhsjudge{C}{\st'}}{\rhsjudge{\mset{\bar{C} \msetcomma C}}
      {\mset{\st \msetcomma \st'}}}
\qquad
    \myirule[(\rhsirule{mset\tx{-}2})]
      {}
      {\rhsjudge{\msetempty}{\msetempty}}
\qquad
    \myirule[(\rhsirule{atom})]
      {}
      {\rhsjudge{A}{A}}
\\\\[-1ex]
    \myirule[(\rhsirule{comp\tx{-}1})]
      {\guardjudge{\subst{\vec{x}}{\vec{t}}g}
       \quad
       \rhsjudge{\subst{\vec{x}}{t}A}{A'}
       \quad
       \rhsjudge{\mcompre{A}{g}{\vec{x} \unidis \mt{ts}}}{A'} }
      {\rhsjudge{\mcompre{A}{g}{\vec{x} \unidis \mset{\mt{ts} \msetcomma \vec{t}}}}{\mset{\st \msetcomma A'}}}
\\\\[-1ex]
    \myirule[(\rhsirule{comp\tx{-}2})]
      {\notguardjudge{\subst{\vec{x}}{\vec{t}}g}
       \quad
       \rhsjudge{\mcompre{A}{g}{\vec{x} \unidis \mt{ts}}}{\st}}
      {\rhsjudge{\mcompre{A}{g}{\vec{x} \unidis \mset{\mt{ts} \msetcomma \vec{t}}}}{\st}}
\qquad
    \myirule[(\rhsirule{comp\tx{-}3})]
      {}
      {\rhsjudge{\mcompre{A}{g}{\vec{x} \unidis \msetempty}}{\msetempty}}
\end{array}
$$
\smallskip

$\ruleform{\text{Rule Application:}
     \prog \rhd \st \abstrans \st }$
$$
\begin{array}{c}
   \myirule
     {\begin{array}{c}
         (\chrrule{r}{\bar{C}_p}{\bar{C}_s}{g}{\bar{B}}) \in \prog
         \quad
         \guardjudge{\theta g}
      \\ \matchjudge{\theta\bar{C}_p}{\st_p}
         \quad \matchjudge{\theta\bar{C}_s}{\st_s}
         \quad \existmatchjudge{\theta \mset{\bar{C}_p \msetcomma \bar{C}_s}}{\st}
         \quad \rhsjudge{\theta\bar{B}}{\st_b}
      \end{array}}
    {\prog \rhd {\mset{\st_p \msetcomma \st_s \msetcomma \st}} \abstrans
             {\mset{\st_p \msetcomma \st_b \msetcomma \st}} }
\end{array}
$$
}

If an instance of a $\chrcp$ rule passes the matching phase, we need to
\emph{unfold} the comprehension patterns in its body into a multiset of atomic
constraints.  The judgment $\rhsjudge{\bar{C}}{\st}$, defined in
Figure~\ref{fig:rule_app_judgment}, does this unfolding.  This judgment is
similar to the matching judgment (Figure~\ref{fig:matching_sem}) except that
it skips any element in the comprehension domain that fails the guard (rule
$\rhsirule{comp\tx{-}2}$).

We now have all the pieces to define the application of a $\chrcp$ rule.  The
judgment $\ruleabstrans{\prog}{\st}{\st'}$ describes a state transition from
stores $\st$ to $\st'$ triggered by applying a rule instance in $\chrcp$
program $\prog$.  This judgment is defined by the rule at the bottom of
Figure~\ref{fig:rule_app_judgment}.  A $\chrcp$ rule
$\chrrule{r}{\bar{C}_p}{\bar{C}_s}{g}{\bar{B}} \in \prog$ is applicable in
$\st$ if there is a substitution $\theta$ that makes the guard satisfiable
(i.e., $\guardjudge{\theta g}$) and if there are fragments $\st_p$ and $\st_s$
of the store that match the corresponding instance of the propagated and
simplified heads ($\matchjudge{\theta\bar{C}_p}{\st_p}$ and
$\matchjudge{\theta\bar{C}_s}{\st_s}$) and that are maximal in $\st$ (i.e.,
$\existmatchjudge{\theta\mset{\bar{C}_p \msetcomma \bar{C}_s}}{\st}$).  We
then apply this rule by replacing the store fragment $\st_s$ that matches the
simplified head instance with the unfolded rule body instance
($\rhsjudge{\theta\bar{B}}{\st_b}$).  We write
$\progabstrans{\prog}{\st}{\st'}$ for zero to more applications of this rule.

\mysection{Monotonicity}
\label{sec:cond_mono}

In this section, we analyze the impact that comprehension patterns have on
monotonicity in $\chrcp$.  Specifically, we show that $\chrcp$ enjoys a
\emph{conditional} form of monotonicity, that we will exploit in
Section~\ref{sec:op_sem} to define an operational semantics for $\chrcp$ based
on (partial) incremental processing of constraints.

In CHR, monotonicity~\cite{DBLP:journals/lncs/Fruhwirth94} is an important
property.  Informally, monotonicity ensures that if a transition between two
CHR states (stores) is possible, it is also possible in any larger store. This
property underlies many efficient implementation techniques for CHR.\@ For
instance, the incremental processing of constraints in CHR's refined
operational semantics~\cite{Duck04therefined} is sound because of the
monotonicity property.  When parallelizing CHR
execution~\cite{DBLP:conf/ppdp/TriossiORF12}, the soundness of composing
concurrent rule application also depends on monotonicity.  In $\chrcp$
however, monotonicity is not guaranteed in its standard form:%
{\small
$$
\text{if}~\prog \rhd \st \abstrans \st', ~\text{then}~\prog \rhd
  \mset{\st,\st''} \abstrans \mset{\st',\st''} ~\text{for any}~\st''
$$}%
This is not surprising, since the maximality requirement of comprehension
patterns could be violated if we add a constraint $A \in \st''$.  Consider the
following example, where we extend the store with a constraint $\mt{a(3)}$
which can be matched by a comprehension pattern in program $\prog$:%
{\small
$$
\begin{array}{c}
  \mt{\prog ~\equiv~ r \atsign \ngmcompre{a(X)}{X \unidis Xs} \Longleftrightarrow \ngmcompre{b(X)}{X \unidis Xs}} \\ \\
  \mt{\prog \rhd \mset{a(1) \msetcomma a(2)} \abstrans \mset{b(1) \msetcomma b(2)}} ~\text{but}~ 
  \mt{\prog \rhd \mset{a(1) \msetcomma a(2) \msetcomma a(3)} \not \abstrans \mset{b(1) \msetcomma b(2) \msetcomma a(3)}}
\end{array}
$$}%
In this example, extending the store with $\mt{a(3)}$ violates the maximality
of comprehension pattern $\ngmcompre{a(X)}{X \unidis Xs}$. Hence, the
derivation under the larger store is not valid with respect to the abstract
semantics.  Yet all is not lost: if we can guarantee that $\st''$ only
contains constraints that \emph{never} match any comprehension pattern in the
head of any rule in $\prog$, we recover monotonicity, albeit in a restricted
form.  For instance, extending the store in the above example with constraint
$\mt{c(3)}$ does not violate monotonicity.

\fig{fig:resid_unif_sem}{Residual Non-Unifiability}{
$$
\begin{array}{@{}c@{}}
    \myirule[(\ruirule{prog\tx{-}1})]
      {\existunfjudge{g}{\mset{\bar{C}_p \msetcomma \bar{C}_s}}{\bar{B}}
       \quad
       \existunfprogjudge{\prog}{\bar{B}}}
      {\existunfprogjudge{\prog,(\chrrule{r}{\bar{C}_p}{\bar{C}_s}{g}{\bar{C}_b})}{\bar{B}}}
\qquad
    \myirule[(\ruirule{prog\tx{-}2})]
      {}
      {\existunfprogjudge{\msetempty}{\bar{B}}} ~ 
\\\\[-1ex]
    \myirule[(\ruirule{mset\tx{-}1})]
      {\existunfjudge{g}{\bar{C}}{\bar{B}}
       \quad
       \existunfjudge{g}{C}{\bar{B}}}
      {\existunfjudge{g}{\mset{\bar{C} \msetcomma C}}{\bar{B}}}
\quad
    \myirule[(\ruirule{mset\tx{-}2})]
      {}
      {\existunfjudge{g}{\msetempty}{\bar{B}}}
\quad
    \myirule[(\ruirule{atom})]
      {}
      {\existunfjudge{g}{A}{\bar{B}}} ~  
\\\\[-1ex]
    \myirule[(\ruirule{comp\tx{-}1})]
      {\notunfsubsume{g}{B}{M}
       \quad
       \existunfjudge{g}{M}{\bar{B}}}
      {\existunfjudge{g}{M}{\mset{\bar{B} \msetcomma B}}}
\qquad
    \myirule[(\ruirule{comp\tx{-}2})]
      {}
      {\existunfjudge{g}{M}{\msetempty}}
\\\\[-1ex]\hline 
\end{array}
$$
{\small
$\begin{array}{r@{\text{ \ iff \ }}l}
   \unfsubsume{g}{A}{\mcompre{A'}{g'}{\vec{x} \unidis \mt{ts}}}
 & \theta A \equiv \theta A',
   \guardjudge{\theta g'},
   \guardjudge{\theta g}
   \text{ for some } \theta
\\[1ex]
   \unfsubsume{g''}{\mcompre{A}{g}{\vec{x} \unidis \mt{ts}}}{\mcompre{A'}{g'}{\vec{x}' \unidis \mt{ts}'}}
 & \theta A \equiv \theta A',
   \guardjudge{\theta g''},
   \guardjudge{\theta g'},
   \guardjudge{\theta g}
   \text{ for some } \theta 
\end{array}$}
}

We formalize this idea by generalizing the residual non-matching judgment from
Figure~\ref{fig:matching_sem}.  The resulting \emph{residual non-unifiability
  judgment} is defined in Figure~\ref{fig:resid_unif_sem}. Given a program
$\prog$ and a multiset of constraint patterns $\bar{B}$, the judgment
$\existunfprogjudge{\prog}{\bar{B}}$ holds if no constraint that matches any
pattern in $\bar{B}$ can be unified with any comprehension pattern in any rule
heads of $\prog$. Rules $(\ruirule{prog\tx{-}*})$ iterate over each $\chrcp$
rule in $\prog$.  For each rule, the judgment
$\existunfjudge{g}{\bar{C}}{\bar{B}}$ tests each rule pattern in $\bar{C}$
against all the patterns $\bar{B}$ (rules $\ruirule{mset\tx{-}*}$).  Rule
($\ruirule{atom}$) handles atomic facts, which are valid by default.  Rules
($\ruirule{comp\tx{-}*}$) check that no body pattern $\bar{B}$ is unifiable
with any rule head pattern $\bar{C}$ under the guard $g$.  It does so on the
basis of the relations at the bottom of Figure~\ref{fig:resid_unif_sem}.

A constraint (atom or comprehension pattern) $C$ is \emph{monotone} w.r.t.\@
program $\prog$ if $\existunfprogjudge{\prog}{C}$ is derivable.
With the residual non-unifiability judgment, we can ensure the conditional monotonicity of
$\chrcp$.

\begin{theorem}[Conditional Monotonicity]
\label{theo:cond_mono}%
Given a program $\prog$ and stores $\st$ and $\st'$, if $\prog \rhd \st
\abstransstar \st'$, then for any store fragment $\st''$ such that
$\existunfprogjudge{\prog}{\st''}$, we have that $\prog \rhd \mset{\st
  \msetcomma \st''} \abstransstar \mset{\st' \msetcomma \st''}$.
\end{theorem}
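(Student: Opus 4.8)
The plan is to argue by induction on the length $n$ of the derivation $\progabstrans{\prog}{\st}{\st'}$. For $n = 0$ we have $\st = \st'$ and there is nothing to prove. For $n > 0$, write the derivation as a first step $\ruleabstrans{\prog}{\st}{\st_1}$ followed by a derivation of length $n-1$ from $\st_1$ to $\st'$. The induction hypothesis lifts the tail to $\progabstrans{\prog}{\mset{\st_1 \msetcomma \st''}}{\mset{\st' \msetcomma \st''}}$, so it suffices to prove a single-step statement: if $\ruleabstrans{\prog}{\st}{\st_1}$ and $\existunfprogjudge{\prog}{\st''}$, then $\ruleabstrans{\prog}{\mset{\st \msetcomma \st''}}{\mset{\st_1 \msetcomma \st''}}$. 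Concatenating this step with the lifted tail then yields the theorem.

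To prove the single-step statement I would invert the single inference rule defining rule application in Figure~\ref{fig:rule_app_judgment}. This produces a rule $(\chrrule{r}{\bar{C}_p}{\bar{C}_s}{g}{\bar{B}}) \in \prog$, a substitution $\theta$ with $\guardjudge{\theta g}$, and a partition $\st = \mset{\st_p \msetcomma \st_s \msetcomma \st_0}$ satisfying $\matchjudge{\theta\bar{C}_p}{\st_p}$, $\matchjudge{\theta\bar{C}_s}{\st_s}$, $\existmatchjudge{\theta\mset{\bar{C}_p \msetcomma \bar{C}_s}}{\st_0}$, $\rhsjudge{\theta\bar{B}}{\st_b}$, with $\st_1 = \mset{\st_p \msetcomma \st_b \msetcomma \st_0}$. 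I would then re-apply the identical rule instance to $\mset{\st \msetcomma \st''} = \mset{\st_p \msetcomma \st_s \msetcomma \mset{\st_0 \msetcomma \st''}}$, keeping $\theta$, $\st_p$, $\st_s$, $\st_b$ and letting $\mset{\st_0 \msetcomma \st''}$ be the part left untouched. The head-matching premises and the rule-body premise carry over verbatim, so the only new obligation is the maximality premise $\existmatchjudge{\theta\mset{\bar{C}_p \msetcomma \bar{C}_s}}{\mset{\st_0 \msetcomma \st''}}$.

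For this I would first record an additivity property of residual non-matching, read off the rules of Figure~\ref{fig:matching_sem}: $\existmatchjudge{\bar{C}}{\mset{\st_a \msetcomma \st_b}}$ holds iff $\existmatchjudge{\bar{C}}{\st_a}$ and $\existmatchjudge{\bar{C}}{\st_b}$ both hold, since each comprehension pattern of $\bar{C}$ is tested against every store atom separately and atoms of $\bar{C}$ contribute nothing. As $\existmatchjudge{\theta\mset{\bar{C}_p \msetcomma \bar{C}_s}}{\st_0}$ is already available, it remains to establish $\existmatchjudge{\theta\mset{\bar{C}_p \msetcomma \bar{C}_s}}{\st''}$, i.e., $\notsubsume{A}{\theta M_0}$ for every comprehension pattern $M_0 = \mcompre{A_0}{g_0}{\vec{x} \unidis \mt{ts}_0}$ occurring in $\mset{\bar{C}_p \msetcomma \bar{C}_s}$ and every $A \in \st''$. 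Assuming otherwise, $A = \phi\,\theta A_0$ and $\guardjudge{\phi\,\theta g_0}$ for some $\phi = \subst{\vec{x}}{\vec{t}}$. Since $\st''$ is a ground store, $A$ is ground; and since $\vec{x}$ is bound locally in $M_0$ and hence does not occur in the rule guard $g$, the composite substitution $\psi := \phi \circ \theta$ satisfies $\psi A_0 = A = \psi A$ (so $\psi A \equiv \psi A_0$), $\guardjudge{\psi g_0}$, and $\guardjudge{\psi g}$ (because $\psi g = \theta g$ as $\theta g$ is ground). Thus $\unfsubsume{g}{A}{M_0}$ holds. But unfolding $\existunfprogjudge{\prog}{\st''}$ through rules $(\ruirule{prog\tx{-}*})$ for $r$, $(\ruirule{mset\tx{-}*})$ for $M_0$, and $(\ruirule{comp\tx{-}*})$ for $A$, as laid out in Figure~\ref{fig:resid_unif_sem}, yields exactly $\notunfsubsume{g}{A}{M_0}$ --- a contradiction. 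This discharges the maximality premise and completes both the single-step statement and the induction.

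The crux, and really the only delicate point, is the final contradiction: matching the one-sided relation $\sqsubseteq_{\mathbf{lhs}}$ that drives residual non-matching against the unification-based relation $\sqsubseteq_{\mathbf{unf}}$ that drives residual non-unifiability, while keeping straight the two tiers of variables --- the rule's variables, instantiated by $\theta$, and the comprehension's locally bound $\vec{x}$, instantiated by $\phi$. What makes it go through is that $\st''$ is ground, so the unifier demanded by $\sqsubseteq_{\mathbf{unf}}$ can simply be taken to be the composite matcher $\phi \circ \theta$; the only bookkeeping is $\alpha$-renaming $\vec{x}$ apart from $\freevars{g}$ and from the range of $\theta$.
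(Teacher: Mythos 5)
Your proposal is correct and follows essentially the same route as the paper: induction on the derivation, with the inductive step reduced to the key fact that $\existunfprogjudge{\prog}{\st''}$ forces $\existmatchjudge{\theta M}{\st''}$ for every instance $\theta M$ of a comprehension pattern in a rule head, which is exactly what the paper's (much terser) proof invokes. Your explicit construction of the composite substitution $\phi \circ \theta$ to contradict $\notunfsubsume{g}{A}{M_0}$, together with the additivity of residual non-matching, simply fills in the details the paper leaves implicit.
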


\begin{proof}
  The proof proceeds by induction on the derivation $\prog \rhd \st
  \abstransstar \st'$.  The monotonicity property holds trivially in the base
  case where we make zero steps.  In the inductive case, we rely on the fact
  that if $\existunfprogjudge{\prog}{\st''}$, then, for any instance of a
  comprehension pattern $M$ occurring in a rule head $\prog$, we are
  guaranteed to have $\existmatchjudge{M}{\st''}$.
\qed%
\end{proof}

This theorem allows us to enlarge the context of any derivation $\prog \rhd
\st \abstransstar \st'$ with $\st''$, if we have the guarantee that all
constraints in $\st''$ are monotone with respect to $\prog$.

\mysection{Operational Semantics}
\label{sec:op_sem}

In this section, we define a lower-level operational semantics for $\chrcp$.
Similarly to~\cite{Duck04therefined}, this operational semantics determines a
goal-based execution of $\chrcp$ programs that utilizes monotonicity
(conditional, in our case) to incrementally process constraints. By
``incrementally'', we mean that goal constraints are added to the store one by
one, as we process each for potential match to the rule heads. The main
difference with~\cite{Duck04therefined} is that a goal constraint $C$ that is
not monotone w.r.t.\@ the program $\prog$ (i.e., such that
$\notexistunfprogjudge{\prog}{C}$) is stored immediately before any other rule
application is attempted. Similarly to other operational semantics for CHR, our
semantics also handles \emph{saturation}, enforcing the invariant that a
propagation rule instance is only applied once for each matching rule head
instance in the store. Hence, programs with propagation rules are not
necessarily non-terminating.  This makes our operational semantics incomplete
w.r.t.\@ the abstract semantics, but saturation is generally viewed as
desirable.

\fig{fig:op_sem_syntax}{Annotated Programs, Execution States and Auxiliary Meta-operations}{
\[
\hspace{-8mm}
\begin{array}{c}

 \begin{array}{ccc}

 \begin{array}{c}
  \text{Occurrence Index}~i \\ \text{Store Label}~n 
 \end{array}
 & \bgap &
 \begin{array}{llcl}
   \text{Rule Head} & H      & \sgap ::= \sgap & \occ{C}{i} \\
   \text{CHR Rule}  & \wrule & ::= & \chrrule{r}{\bar{H}}{\bar{H}}{g}{\bar{B}} \\
   \text{Program}   & \wprog & ::= & \bar{\wrule} 
 \end{array} 

 \end{array} \vspace{2mm} \\

 \begin{array}{lclcl}
   \text{Matching History}         & \sgap & \Theta  & ~::=~ & (\bar{\theta},\bar{n}) \\
   \text{Goal Constraint}          & \sgap & G       & ~::=~ & \initgoal~\bar{B} \mmid \lazygoal~A \mmid \eagergoal~\idcons{A}{n} \mmid \actgoal~\idcons{A}{n}~i 
                                                    \mmid \propgoal~\idcons{A}{n}~i~\Theta \\
   \text{Goal Stack}               & \sgap & \gs     & ~::=~ & \stackempty \mmid \stack{G}{\gs} \\
   \text{Labeled Store}         & \sgap & \lst    & ~::=~ & \msetempty  \mmid \mset{\lst \msetcomma \idcons{A}{n}} \\
   \text{Execution State}      & \sgap & \sigma  & ~::=~ & \execstate{\gs}{\lst}
 \end{array} \vspace{2mm} \\

 \begin{array}{ccccccc}
   \dropidx{\occ{C}{i}} \Coloneqq C & \sgap & \getidx{\occ{C}{i}} \Coloneqq \{i\} & \sgap & \droplabels{\idcons{A}{n}} \Coloneqq A 
   & \sgap & \getlabels{\idcons{A}{n}} \Coloneqq \{n\}
 \end{array} \vspace{2mm} \\

 \begin{array}{c}
   \newlabels{\lst}{A} \Coloneqq \idcons{A}{n} \sgap \text{such that}~n \notin \getlabels{\lst}
 \end{array} \vspace{2mm} \\

 \begin{array}{cc}
   \lkprog{\wprog}{i} \sgap ::= &
   \begin{cases}
     \sgap \wrule & \text{if}~ \wrule \in \wprog ~\text{and}~ i \in \getidx{\wrule} \\
     \sgap \bot   & \text{otherwise}
   \end{cases}
 \end{array} \vspace{2mm} \\

\comment{
 \begin{array}{rl}
   \text{Drop Index} \bgap & 
   \begin{cases}
     \sgap \dropidx{\chrrule{r}{\bar{H}_p}{\bar{H}_s}{g}{\bar{B}}} & ::= \sgap \chrrule{r}{\dropidx{\bar{H}_s}}{\dropidx{\bar{H}_s}}{g}{\bar{B}} \\
     \sgap \dropidx{\occ{C}{i}} & ::= \sgap C   
   \end{cases} \\
   \text{Get Index} \bgap & 
   \begin{cases}
     \sgap \getidx{\chrrule{r}{\bar{H}_p}{\bar{H}_s}{g}{\bar{B}}} & ::= \sgap \getidx{\bar{H}_s} \cup \getidx{\bar{H}_s} \\
     \sgap \getidx{\occ{C}{i}} & ::= \sgap i   
   \end{cases} 
 \end{array} \\ \\

 \begin{array}{rl}
   \text{Drop Labels} \bgap & 
   \begin{cases}
     \sgap \droplabels{\mset{\lst \msetcomma \idcons{A}{n}}} & \Coloneqq \sgap \mset{\droplabels{\lst} \msetcomma A} \\
     \sgap \droplabels{\occ{C}{i}} & \Coloneqq \sgap \msetempty
   \end{cases} \\
   \text{Get Labels} \bgap & 
   \begin{cases}
     \sgap \getlabels{\mset{\lst \msetcomma \idcons{A}{n}}} & ::= \sgap \getlabels{\lst} \cup \{n\} \\
     \sgap \getlabels{\msetempty} & ::= \sgap \emptyset   
   \end{cases} \\
   \text{New Labels} \bgap &
   \begin{cases}
     \sgap \newlabels{\lst}{\mset{\st \msetcomma A}} & \Coloneqq \sgap \newlabels{\mset{\lst \msetcomma \idcons{A}{n}}}{\st}
           \sgap \text{such that}~n \notin \getlabels{\lst} \\
     \sgap \newlabels{\lst}{\msetempty} & \Coloneqq \sgap \msetempty 
   \end{cases}
 \end{array} \\ \\

 \begin{array}{rl}
   \text{Append} \bgap &
   \begin{cases}
      \stack{o}{\vec{o}} \stplus \vec{o}'     & \Coloneqq \sgap \stack{o}{\vec{o} \stplus \vec{o}'} \\
      \stack{o}{\stackempty} \stplus \vec{o}' & \Coloneqq \sgap \stack{o}{\vec{o}'}
   \end{cases}
 \end{array} \\ \\
}

 \begin{array}{ccccc}
   \myirule{\matchjudge{\dropidx{\bar{H}}}{ \droplabels{\lst} }}{\matchjudge{\bar{H}}{\lst}}
   & \bgap &
   \myirule{\existmatchjudge{\dropidx{\bar{H}}}{\droplabels{\lst}}}{\existmatchjudge{\bar{H}}{\lst}}
   & \bgap &
   \myirule{\existunfprogjudge{\dropidx{\prog}}{\bar{C}}}{\existunfprogjudge{\prog}{\bar{C}}}
 \end{array}

\comment{
 \begin{array}{rl}
   \text{Unfolding Goals} \bgap &
   \begin{cases}
     \sgap \eggs{\mset{\lst \msetcomma \idcons{A}{n}}}   & ::= \bgap \stack{\eagergoal~\idcons{A}{n}}{\eggs{\lst}} \\
     \sgap \eggs{\msetempty}                & ::= \bgap \epsilon \\
     \sgap \lzgs{\mset{\st_m \msetcomma A}} & ::= \bgap \stack{\lazygoal~A}{\lzgs{\st_m}} \\
     \sgap \lzgs{\msetempty}                & ::= \bgap \epsilon
   \end{cases}
 \end{array}
}
\end{array}
\]
\vspace{-6mm}
}

Figure~\ref{fig:op_sem_syntax} defines the execution states of $\chrcp$
programs in this operational semantics and some auxiliary notions.  We
annotate a $\chrcp$ program $\prog$ with \emph{rule head occurrence indices}.
The result is denoted $\wprog$.  Specifically, each rule head pattern $C$ of
$\prog$ is annotated with a unique integer $i$ starting from $1$, and is
written $\occ{C}{i}$ in $\wprog$. This represents the order in which rule
heads are matched against a constraint.  Execution states are pairs $\sigma =
\execstate{\gs}{\lst}$ where $\gs$ is the \emph{goal stack} and $\lst$ is the
\emph{labeled store}. The latter is a constraint store with each constraint
annotated with a unique label $n$.  This label allows us to distinguish
between copies of the same constraint in the store and to uniquely associate a
goal constraint with a specific stored constraint.  Labels also support
saturation for propagation rules (see below).  Each goal in a goal stack $\gs$
represents a unit of execution and $\gs$ itself is a sequence of goals to be
executed.  A non-empty goal stack has the form $\stack{G}{\gs}$, where $G$ is
the goal at the top of the stack and $\gs$ the rest of the stack. The empty
stack is denoted $\epsilon$. We abbreviate the singleton stack containing $G$
as $\sstack{G}$.  Given two stacks $\gs_1$ and $\gs_2$ we denote their
concatenation as $\gs_1 \stplus \gs_2$. We write $G \in \gs$ to denote that
$G$ occurs in some position of $\gs$.  Unlike~\cite{Duck04therefined}, we
attach a label to each goal.  These labels are $\initgoal$, $\lazygoal$,
$\eagergoal$, $\actgoal$ and $\propgoal$.  We will explain the purpose of each
as we describe the semantics.

Figure~\ref{fig:op_sem_syntax} defines several auxiliary operations that
either retrieve or drop occurrence indices and store labels: $\dropidx{H}$ and
$\getidx{H}$ deal with indices, $\droplabels{\_}$ and $\getlabels{\_}$ with
labels.  We inductively extend $\dropidx{\_}$ to multisets of rule heads and
annotated programs, each returning the respective syntactic construct with
occurrence indices removed.  Likewise, we extend $\getidx{\_}$ to multisets of
rule heads and $\chrcp$ rules, to return the set of all occurrence indices
that appear in them. We similarly extend $\droplabels{\_}$ and
$\getlabels{\_}$ to be applicable with labeled stores.  As a means of
generating new labels, we also define the operation $\newlabels{\lst}{A}$ that
returns $\idcons{A}{n}$ such that $n$ does not occur in $\lst$. Given
annotated program $\wprog$ and occurrence index $i$, $\lkprog{\wprog}{i}$
denotes the rule $\wrule \in \wprog$ in which $i$ occurs, or $\bot$ if $i$ does
not occur in any of $\wprog$'s rules.  The bottom of
Figure~\ref{fig:op_sem_syntax} also defines extensions to the match, residual
non-matching and residual unifiability judgment to annotated entities. Applied
to the respective occurrence indexed or labeled syntactic constructs, these
judgments simply strip away occurrence indices or labels.

\fig{fig:op_sem_core}{Operational Semantics of $\chrcp$ (Core-Set)}{
{\small
\[
\hspace{-10mm}
\arraycolsep=1.4pt\def\arraystretch{1.2}
\begin{array}{|c|l|}
   \hline
   (\mt{init}) & \begin{array}{l}
                    \wprog \rhd \execstate{\stack{\initgoal~\mset{\bar{B}_l \msetcomma \bar{B}_e}}{\gs}}{\lst}
                         \optrans \execstate{\lzgs{\st_l} \stplus \eggs{\lst_e} \stplus \gs}{\mset{\lst \msetcomma \lst_e}} \\
                    \text{such that}~\existunfprogjudge{\wprog}{\bar{B}_l} \sgap \rhsjudge{\bar{B}_e}{\st_e} \sgap 
                         \rhsjudge{\bar{B}_l}{\st_l} \sgap \lst_e = \newlabels{\lst}{\st_e} \\
                    \begin{array}{lll}
                      \text{where}~& \eggs{\mset{\lst \msetcomma \idcons{A}{n}}} \Coloneqq \stack{\eagergoal~\idcons{A}{n}}{\eggs{\lst}} 
                                     \bgap &  \eggs{\msetempty} \Coloneqq \epsilon \\
                                   & \lzgs{\mset{\st_m \msetcomma A}} \Coloneqq \stack{\lazygoal~A}{\lzgs{\st_m}} 
                                           &  \lzgs{\msetempty}          \Coloneqq \epsilon
                    \end{array}
                 \end{array} \\ \hline 

   (\mt{lazy}\tx{-}\mt{act})    &  \begin{array}{l}
                                     \wprog \rhd \execstate{\stack{\lazygoal~A}{\gs}}{\lst}  
                                         \optrans \execstate{\stack{\actgoal~\idcons{A}{n}~1}{\gs}}{\mset{\lst \msetcomma \idcons{A}{n}}}  \\
                                     \text{such that}~ {\mset{\idcons{A}{n}}} = \newlabels{\lst}{\mset{A}}
                                  \end{array} \\ \hline

   (\mt{eager}\tx{-}\mt{act}) &  \wprog \rhd \execstate{\stack{\eagergoal~\idcons{A}{n}}{\gs}}{\mset{\lst \msetcomma \idcons{A}{n}}}  
                                         \optrans \execstate{\stack{\actgoal~\idcons{A}{n}~1}{\gs}}{\mset{\lst \msetcomma \idcons{A}{n}}}  \\ \hline

   (\mt{eager}\tx{-}\mt{drop}) &  \wprog \rhd \execstate{\stack{\eagergoal~\idcons{A}{n}}{\gs}}{\lst}  
                                    \optrans \execstate{\gs}{\lst} \bgap \bgap \text{if}~ \idcons{A}{n} \notin \lst \\ \hline

   (\mt{act}\tx{-}\mt{simpa}\tx{-}1) &  \begin{array}{l}
                         
                         \begin{array}{l}
                           \wprog ~ \rhd \execstate{\stack{\actgoal~\idcons{A}{n}~i}{\gs}}
                                                    {\mset{\lst \msetcomma \lst_p \msetcomma \lst_s \msetcomma \lst_a \msetcomma \idcons{A}{n}}}
                           \optrans \execstate{\stack{\initgoal~\theta\bar{B}}{\gs}}{\mset{\lst \msetcomma \lst_p}} 
                         \end{array} \\
                          \text{if}~\lkprog{\wprog}{i} = (\chrrule{r}{\bar{H}_p}{\mset{\bar{H}_s \msetcomma \occ{C}{i}}}{g}{\bar{B}}),~\text{there exists some}~\theta~\text{such that}~ \\
                         \sgap \begin{array}{lll}
                           - & \guardjudge{\theta g} \bgap \matchjudge{\theta C}{\mset{\drop{\lst_a} \msetcomma \idcons{A}{n}}} & (\text{Guard Satisfied and Active Match}) \\
                           - & \matchjudge{\theta\bar{H}_p}{\drop{\lst_p}} \bgap \matchjudge{\theta\bar{H}_s}{\drop{\lst_s}}  & (\text{Partners Match}) \\
                           - & \existmatchjudge{\theta\bar{H}_p}{\lst} \bgap \existmatchjudge{\theta\bar{H}_s}{\lst} 
                               \bgap \existmatchjudge{\theta C}{\lst} \sgap &  (\text{Maximal Comprehension}) 
                         \end{array}  
                       \end{array} \\ \hline

   (\mt{act}\tx{-}\mt{simpa}\tx{-}2) &  \begin{array}{l}
                         
                         \begin{array}{l}
                           \wprog ~ \rhd \execstate{\stack{\actgoal~\idcons{A}{n}~i}{\gs}}
                                                    {\mset{\lst \msetcomma \lst_p \msetcomma \lst_s \msetcomma \lst_a \msetcomma \idcons{A}{n}}} \\
                           \optrans ~ \execstate{\sstack{\initgoal~\theta \bar{B}} \stplus \stack{\actgoal~\idcons{A}{n}~i}{\gs}}
                                              {\mset{\lst \msetcomma \lst_p \msetcomma \lst_a \msetcomma \idcons{A}{n}}} 
                         \end{array} \\
                          \text{if}~\lkprog{\wprog}{i} = (\chrrule{r}{\mset{\bar{H}_p \msetcomma \occ{C}{i}}}{\bar{H}_s}{g}{\bar{B}})
                         ~\text{and}~\bar{H}_s \neq \msetempty,~\text{there exists some}~\theta~\text{such that}~ \\
                         \sgap \begin{array}{lll}
                           - & \guardjudge{\theta g} \bgap \matchjudge{\theta C}{\mset{\drop{\lst_a} \msetcomma \idcons{A}{n}}} & (\text{Guard Satisfied and Active Match}) \\
                           - & \matchjudge{\theta\bar{H}_p}{\drop{\lst_p}} \bgap \matchjudge{\theta\bar{H}_s}{\drop{\lst_s}}  & (\text{Partners Match}) \\
                           - & \existmatchjudge{\theta\bar{H}_p}{\lst} \bgap \existmatchjudge{\theta\bar{H}_s}{\lst} 
                               \bgap \existmatchjudge{\theta C}{\lst} \sgap &  (\text{Maximal Comprehension}) 
                         \end{array}  
                       \end{array} \\ \hline

   (\mt{act}\tx{-}\mt{next}) &  \begin{array}{l}
                                   \wprog \rhd \execstate{\stack{\actgoal~\idcons{A}{n}~i}{\gs}}{\lst}
                                  \optrans \execstate{\stack{\actgoal~\idcons{A}{n}~(i+1)}{\gs}}{\lst} \\
                                   \text{if neither}~(\mt{act}\tx{-}\mt{simpa}\tx{-}1)~\text{nor}~(\mt{act}\tx{-}\mt{simpa}\tx{-}2)~\text{applies}.
                                 \end{array} \\ \hline

   (\mt{act}\tx{-}\mt{drop}) &  \wprog \rhd \execstate{\stack{\actgoal~\idcons{A}{n}~i}{\gs}}{\lst} \optrans \execstate{\gs}{\lst}
                                    \bgap \bgap \text{if}~\lkprog{\wprog}{i} = \bot \\ \hline
\end{array} 
\]
\vspace{-6mm}
}}

The operational semantics of $\chrcp$ is defined by the judgment
$\ruleoptrans{\wprog}{\sigma}{\sigma'}$, where $\wprog$ is an annotated
$\chrcp$ program and $\sigma$, $\sigma'$ are execution states. It describes
the goal-orientated execution of the $\chrcp$ program $\wprog$. We write
$\progoptrans{\wprog}{\sigma}{\sigma'}$ for zero or more such derivation
steps.  The operational semantics introduces administrative derivation steps
that describe the incremental processing of constraints, as well as the
saturation of propagation rule applications (see below).  Execution starts
from an \emph{initial} execution state $\sigma$ of the form
$\execstate{\sstack{\initgoal~\bar{B}}}{\msetempty}$ where $\bar{B}$ is the
initial multiset of constraints.  Figure~\ref{fig:op_sem_core} shows the core
set of rules for this judgment. They handle all cases except those for
propagation rules. Rule ($\mt{init}$) applies when the leading goal has the
form $\initgoal~\bar{B}$.  It partitions $\bar{B}$ into $\bar{B}_l$ and
$\bar{B}_e$, both of which are unfolded into $\st_l$ and $\st_e$ respectively.
$\bar{B}_l$ contains the multiset of constraints which are monotone w.r.t. to
$\wprog$ (i.e., $\existunfprogjudge{\wprog}{\bar{B}_l}$).  These constraints
are \emph{not} added to the store immediately, rather we only add them into
the goal as `$\lazygoal$` goals (lazily stored).  Constraints $\bar{B}_e$ are
not monotone w.r.t. to $\wprog$, hence they are immediately added to the store
and added to the goals as `$\eagergoal$' goals (eagerly stored). This is key
to preserving the soundness of the operational semantics w.r.t.\@ the abstract
semantics.  Rule ($\mt{lazy}\tx{-}\mt{act}$) handles goals of the form
$\lazygoal~A$: we initiate active matching on $A$ by adding it to the store
and adding the new goal $\actgoal~\idcons{A}{n}~1$.  Rules
($\mt{eager}\tx{-}\mt{act}$) and ($\mt{eager}\tx{-}\mt{drop}$) deal with the
cases of goals of the form $\eagergoal~\idcons{A}{n}$. The former adds the
goal `$\actgoal~\idcons{A}{n}~1$' if $\idcons{A}{n}$ is still present in the
store, while the later simply drops the leading goal otherwise. The last four
rules handle case for a leading goal of the form $\actgoal~\idcons{A}{n}~i$:
rules ($\mt{act}\tx{-}\mt{simpa}\tx{-}1$) and
($\mt{act}\tx{-}\mt{simpa}\tx{-}2$) handle the cases where the active
constraint $\idcons{A}{n}$ matches the $i^{th}$ rule head occurrence of
$\wprog$, which is a simplified or propagated head respectively. If this match
satisfies the rule guard condition, matching partners exist in the store and
the comprehension maximality condition is satisfied, we apply the
corresponding rule instance. To define these matching conditions, we use the
auxiliary judgments defined by the abstract semantics
(Figure~\ref{fig:rule_app_judgment}).  Note that the rule body instance
$\theta \bar{B}$ is added as the new goal $\initgoal~\bar{B}$.  This is
because it potentially contains non-monotone constraints: we will employ rule
($\mt{init}$) to determine the storage policy of each constraint. For rule
($\mt{act}\tx{-}\mt{simpa}\tx{-}2$), we have the additional condition that the
simplified head of the rule be not empty, hence this case does not apply for
propagation rules. Rule ($\mt{act}\tx{-}\mt{next}$) applies when the previous
two rules do not, hence we cannot apply any instance of the rule with
$\idcons{A}{n}$ matching the $i^{th}$ rule head.  Finally, rule
($\mt{act}\tx{-}\mt{drop}$) drops the leading goal if occurrence index $i$
does not exist in $\wprog$. Since the occurrence index is incremented by
($\mt{act}\tx{-}\mt{next}$) starting with the activation of the goal at index
$1$, this indicates that we have exhaustively matched the constraint
$\idcons{A}{n}$ against all rule head occurrences.

\fig{fig:op_sem_prop}{Operational Semantics of $\chrcp$ (Propagation-Set)}{
{\small
\[
\hspace{-10mm}
\arraycolsep=1.4pt\def\arraystretch{1.2}
\begin{array}{|c|l|}
   \hline
   (\mt{act}\tx{-}\mt{prop}) & \begin{array}{l}
                                 \prog \rhd \execstate{\stack{\actgoal~\idcons{A}{n}~i}{\gs}}{\lst} 
                                            \optrans \execstate{\stack{\propgoal~\idcons{A}{n}~i~\msetempty}{\gs}}{\lst} \\
                                 \text{if}~\lkprog{\wprog}{i} = (\chrrule{r}{\bar{H}_p}{\msetempty}{g}{\bar{B}}) 
                               \end{array} \\ \hline

   (\mt{prop}\mt{-}\mt{prop}) & \begin{array}{l}
                         
                         \begin{array}{l}
                           \wprog ~ \rhd \execstate{\stack{\propgoal~\idcons{A}{n}~i~\Theta}{\gs}}
                                                    {\mset{\lst \msetcomma \lst_p \msetcomma \lst_a \msetcomma \idcons{A}{n}}} \\
                           \optrans ~ \execstate{\sstack{\initgoal~\theta \bar{B}} \stplus \stack{\propgoal~\idcons{A}{n}~i~(\Theta \cup (\theta,\bar{n}))}{\gs}}
                                                {\mset{\lst \msetcomma \lst_p \msetcomma \lst_a \msetcomma \idcons{A}{n}}} 
                         \end{array} \\
                          \text{if}~\lkprog{\wprog}{i} = (\chrrule{r}{\mset{\bar{H}_p \msetcomma \occ{C}{i}}}{\msetempty}{g}{\bar{B}}),
                         ~\text{there exists some}~\theta~\text{such that}~ \\
                          \sgap \begin{array}{lll}
                           - & \bar{n} \equiv \getlabels{\mset{\lst_p \msetcomma \lst_a \msetcomma \idcons{A}{n}}} \bgap 
                               (\theta,\bar{n}) \notin \Theta & (\text{Unique Instance}) \\
                           - & \guardjudge{\theta g} \bgap \matchjudge{\theta C}{\mset{\drop{\lst_a} \msetcomma \idcons{A}{n}}} & (\text{Guard Satisfied and Active Match}) \\
                           - & \matchjudge{\theta\bar{H}_p}{\drop{\lst_p}}  & (\text{Partners Match}) \\
                           - & \existmatchjudge{\theta\bar{H}_p}{\lst} \bgap \existmatchjudge{\theta C}{\lst} &  (\text{Maximal Comprehension}) 
                         \end{array}  
                       \end{array} \\ \hline

   (\mt{prop}\tx{-}\mt{sat}) & \begin{array}{l}
                                 \prog \rhd \execstate{\stack{\propgoal~\idcons{A}{n}~i~\Theta}{\gs}}{\lst}
                                  \optrans \execstate{\stack{\actgoal~\idcons{A}{n}~(i+1)}{\gs}}{\lst} \\
                                 \text{if}~(\mt{prop}\tx{-}\mt{prop})~\text{does not apply}.
                               \end{array} \\ \hline 
\end{array}
\]
\vspace{-6mm}
}}

Figure~\ref{fig:op_sem_prop} defines the rules that handle propagation rules.
Propagation rules need to be managed specially to avoid non-termination. Rule
($\mt{act}\tx{-}\mt{prop}$) defines the case where the active goal
$\actgoal~\idcons{A}{n}~i$ is such that the rule head occurrence index $i$ is
found in a propagation rule, then we replace the leading goal with
$\propgoal~\idcons{A}{n}~i~\msetempty$. Rule ($\mt{prop}\tx{-}\mt{prop}$)
applies an instance of this propagation rule that has not been applied before:
the application history is tracked by $\Theta$, which contains a set of pairs
$(\theta,\bar{n})$.  Finally, ($\mt{prop}\tx{-}\mt{sat}$) handles the case
where ($\mt{prop}\tx{-}\mt{prop}$) no longer applies, hence \emph{saturation}
has been achieved.  Since we uniquely identify an instance of the propagation
rule by the pair $(\theta,\bar{n})$, saturation is based on unique
permutations of constraints in the store that match the rule heads.

\mysection{Correspondence with the Abstract Semantics}
\label{sec:sound}

\fig{fig:corresp}{Correspondence Relation}{
\[
\hspace{-10mm}
\begin{array}{c}
 \begin{array}{ccccc}
 \text{Multisets} & \begin{cases} 
                      \sgap \corresp{\mset{\bar{o} \msetcomma o}} \sgap & \Coloneqq \sgap \mset{\corresp{\bar{o}} \msetcomma \corresp{o}} \\
                      \sgap \corresp{\msetempty} \sgap & \Coloneqq \sgap \msetempty
                    \end{cases} 
 & \bgap & 
 \text{Rule~Head} & \begin{cases} \sgap \corresp{~\occ{C}{i}~} \sgap \Coloneqq \sgap C \end{cases}
 \end{array} \vspace{2mm} \\

 \begin{array}{rl}
   \text{Rule} & \begin{cases} 
                    \sgap \corresp{~\chrrule{r}{\bar{H}_p}{\bar{H}_s}{g}{\bar{B}}~} 
                    \sgap \Coloneqq \sgap \chrrule{r}{\corresp{\bar{H}_p}}{\corresp{\bar{H}_s}}{g}{\bar{B}} 
                 \end{cases}
 \end{array} \vspace{2mm} \\

 \begin{array}{ccc}
   \begin{array}{rl}
     \text{State}  & \begin{cases} 
                       \sgap \corresp{~\execstate{\gs}{\lst}~} \sgap \Coloneqq \sgap \mset{\corresp{\gs} \msetcomma \corresp{\lst}} 
                     \end{cases} \vspace{2mm} \\

    \text{Constraint} & \begin{cases} \sgap \corresp{~\idcons{A}{n}~} \sgap \Coloneqq \sgap A \end{cases} \vspace{2mm} \\

     \text{Goals}  & \begin{cases}
                       \sgap \corresp{\stack{G}{\gs}} \sgap & \Coloneqq \sgap \mset{\corresp{G} \msetcomma \corresp{\gs}} \\
                       \sgap \corresp{\stackempty} \sgap  & \Coloneqq \sgap \msetempty
                     \end{cases} 
   \end{array} & ~ 
   \begin{array}{rl}
    \text{Goal} & \begin{cases}
                   \sgap \corresp{~\initgoal~\bar{B}~} \sgap \Coloneqq \sgap \st \sgap \text{s.t.} \sgap \rhsjudge{\bar{B}}{\st} \\
                   \sgap \corresp{~\lazygoal~A~}       \sgap \Coloneqq \sgap \mset{A} \\
                   \sgap \corresp{~\eagergoal~\idcons{A}{n}~} \sgap \Coloneqq \sgap \msetempty \\
                   \sgap \corresp{~\actgoal~\idcons{A}{n}~i~} \sgap \Coloneqq \sgap \msetempty \\
                   \sgap \corresp{~\propgoal~\idcons{A}{n}~i~\Theta~} \sgap \Coloneqq \sgap \msetempty
                  \end{cases} 
   \end{array} 
 \end{array} 
\end{array}
\]
\vspace{-6mm}
}

In this section, we relate the operational semantics shown in
Section~\ref{sec:op_sem} with the abstract semantics
(Section~\ref{sec:abs_sem}). Specifically, we prove the soundness of the
operational semantics w.r.t. the abstract semantics.

Figure~\ref{fig:corresp} defines a correspondence relation between
meta-objects of the operational semantics and those of the abstract semantics.
Given an object $o_\omega$ of the operational semantics, $o_\alpha =
\corresp{o_\omega}$ is the corresponding abstract semantic object.  For
instance, $\corresp{\wprog}$ strips occurrence indices away from $\wprog$.
Instead, the abstract constraint store $\corresp{\execstate{\gs}{\lst}}$
contains constraints in $\lst$ with labels removed, and the multiset union of
constraints found in `$\initgoal$' and `$\lazygoal$' goals of $\gs$.

We also need to define several invariants and prove that they are preserved
throughout the derivations of the operational semantics of $\chrcp$. An
execution state $\execstate{\gs}{\lst}$ is \emph{valid} for program
$\wprog$ if: \vspace{-2mm}
\begin{itemize}
\item%
  $\existunfprogjudge{\prog}{A}$ for any $\lazygoal~A \in \gs$.
\item%
  If $\gs = \stack{G}{\gs'}$, then $\initgoal~\bar{B} \notin \gs'$ for any
  $\bar{B}$.
\end{itemize}
\vspace{-2mm}
\comment{
\[
\begin{array}{c}
  \execstate{\gs}{\lst} ~\text{is \emph{initial}} \sgap \mt{iff} \sgap 
  \text{for some}~\bar{B} \sgap \gs = \sstack{\initgoal~\bar{B}} \sgap \lst = \msetempty \\ \\

  \execstate{\gs}{\lst} ~\text{is \emph{terminal}} \sgap \mt{iff} \sgap
  \gs = \epsilon \\ \\

  \execstate{\gs}{\lst} ~\text{is \emph{valid} w.r.t}~\prog \sgap \mt{iff} \sgap
  \begin{cases}
    \sgap \text{For any} \sgap \lazygoal~A \in \gs \sgap \text{we have} \sgap \existunfprogjudge{\prog}{A} \\
    \sgap \text{For any} \sgap \actgoal~\idcons{A}{n}~i \in \gs  \sgap \text{we have} \sgap \idcons{A}{n} \in \lst \\
    \sgap \text{For any} \sgap \propgoal~\idcons{A}{n}~i~\Theta \in \gs \sgap \text{we have} \sgap \idcons{A}{n} \in \lst \\
    \sgap \text{If} \sgap \gs = \stack{g}{\gs'} \sgap \text{then for any}~\bar{B}~\text{we have} \sgap \initgoal~\bar{B} \notin \gs'
  \end{cases}
\end{array}
\]
}
Initial states of the form $\execstate{\sstack{\initgoal~\bar{B}}}{\msetempty}$ are trivially
valid states. Lemma~\ref{lem:preserve} proves that derivation steps $\wprog \rhd \sigma \optrans \sigma'$
preserve validity during execution.
\vspace{-2mm}
\begin{lemma}[Preservation]
\label{lem:preserve}%
  For any program $\wprog$, given any valid state $\sigma$ and any state $\sigma'$, if
  $\wprog \rhd \sigma \optrans \sigma'$, then $\sigma'$ must be a valid state.
\end{lemma}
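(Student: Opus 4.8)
The plan is to proceed by case analysis on the operational semantics rule used to justify $\wprog \rhd \sigma \optrans \sigma'$, checking that each of the two validity conditions is preserved. Let $\sigma = \execstate{\gs}{\lst}$ be valid and suppose $\sigma' = \execstate{\gs'}{\lst'}$. For the first condition, we must show $\existunfprogjudge{\prog}{A}$ for every $\lazygoal~A \in \gs'$. The only rules that introduce new $\lazygoal$ goals are $(\mt{init})$ (via the $\lzgs{\cdot}$ expansion of $\st_l$) and, indirectly, none of the others; all remaining rules either leave the $\lazygoal$ goals of $\gs$ untouched or only remove the leading goal. For $(\mt{init})$, the side condition $\existunfprogjudge{\wprog}{\bar{B}_l}$ together with $\rhsjudge{\bar{B}_l}{\st_l}$ gives us the required monotonicity of each constraint placed into a $\lazygoal$ goal: unfolding a body multiset that is residual-non-unifiable w.r.t.\ $\prog$ yields only atoms that are themselves residual-non-unifiable, so $\existunfprogjudge{\prog}{A}$ holds for each $A \in \st_l$. (This sub-argument is a straightforward induction on the derivation of $\rhsjudge{\bar{B}_l}{\st_l}$, using that $\bar{C}_b$-patterns in $\bar{B}_l$ are preserved under unfolding up to instantiation.) For the rules in Figure~\ref{fig:op_sem_core} other than $(\mt{init})$ and those in Figure~\ref{fig:op_sem_prop}, the set of $\lazygoal$ goals in $\gs'$ is a subset of that in $\gs$ (after discarding the leading goal), so the first condition is inherited directly from validity of $\sigma$.

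For the second condition --- that if $\gs' = \stack{G}{\gs''}$ then $\initgoal~\bar{B} \notin \gs''$ --- the key observation is that an $\initgoal$ goal is only ever created at the \emph{top} of the stack: rules $(\mt{act}\tx{-}\mt{simpa}\tx{-}1)$, $(\mt{act}\tx{-}\mt{simpa}\tx{-}2)$ and $(\mt{prop}\tx{-}\mt{prop})$ each push $\initgoal~\theta\bar{B}$ onto the front of the remaining stack, and $(\mt{init})$ itself consumes the leading $\initgoal$ goal and replaces it with $\lazygoal$/$\eagergoal$ goals followed by $\gs$. For the simpa/prop rules, we must check that the tail they push onto --- namely $\gs$ in the simpa-1 case, or $\stack{\actgoal~\idcons{A}{n}~i}{\gs}$ / $\stack{\propgoal~\idcons{A}{n}~i~\Theta'}{\gs}$ in the others --- contains no $\initgoal$ goal. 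Since the pre-state $\sigma$ is valid and its goal stack has leading goal $\actgoal~\idcons{A}{n}~i$ (resp.\ $\propgoal\ldots$), validity of $\sigma$ already tells us $\initgoal~\bar{B} \notin \gs$; and an $\actgoal$ or $\propgoal$ goal is manifestly not an $\initgoal$ goal, so the augmented tail still has no $\initgoal$ goal, and after pushing a single new $\initgoal$ goal on top the invariant is restored. For $(\mt{init})$, the new stack is $\lzgs{\st_l} \stplus \eggs{\lst_e} \stplus \gs$; the first two segments consist solely of $\lazygoal$ and $\eagergoal$ goals, and $\gs$ has no $\initgoal$ goal (it is a proper tail of the valid $\sigma$'s stack), so no $\initgoal$ goal survives anywhere but possibly not at all --- in particular not past the leading position. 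All the remaining rules only pop the leading goal or rename it in place (e.g.\ $\actgoal~\ldots~i \mapsto \actgoal~\ldots~(i+1)$), hence the tail of $\gs'$ is a tail of $\gs$ and the invariant transfers immediately.

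The main obstacle is the sub-lemma used in the $(\mt{init})$ case: that $\existunfprogjudge{\prog}{\bar{B}_l}$ and $\rhsjudge{\bar{B}_l}{\st_l}$ together imply $\existunfprogjudge{\prog}{A}$ for every $A \in \st_l$. One has to be careful that body comprehension patterns in $\bar{B}_l$ unfold to ground instances of their atom component, and that residual non-unifiability --- which quantifies over \emph{all} substitutions, not just those drawn from the comprehension domain --- is therefore inherited by each such instance. Concretely, if $\mcompre{A}{g}{\vec{x}\unidis t}$ occurs in $\bar{B}_l$ and $\existunfprogjudge{\prog}{\mcompre{A}{g}{\vec{x}\unidis t}}$ holds, then by the definition of $\unfsubsume{\cdot}{\cdot}{\cdot}$ (and since the domain bindings are ignored there) every ground instance $\subst{\vec{x}}{\vec{t}}A$ satisfying $\subst{\vec{x}}{\vec{t}}g$ is itself non-unifiable with every rule-head comprehension of $\prog$, which is exactly what $\rhsirule{comp\tx{-}1}$ extracts. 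Everything else is bookkeeping on stack shapes. Once the sub-lemma is in hand, the case analysis is mechanical.
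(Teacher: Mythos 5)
Your proof is correct and follows essentially the same route as the paper, which simply states that one argues by structural induction (case analysis) on the derivation $\wprog \rhd \sigma \optrans \sigma'$ and checks that each transition preserves validity. Your write-up merely fills in the details the paper leaves implicit, including the needed sub-lemma that $\existunfprogjudge{\prog}{\bar{B}_l}$ together with $\rhsjudge{\bar{B}_l}{\st_l}$ transfers monotonicity to each unfolded atom in the $(\mt{init})$ case.
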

\vspace{-2mm}
\begin{proof}
  The proof proceeds by structural induction on all possible forms of
  derivation $\wprog \rhd \sigma \optrans \sigma'$.  It is easy to show that
  each transition preserves validity.
\qed%
\end{proof}

Lemma~\ref{lem:corr_step} states that any derivation step of the operational
semantics $\wprog \rhd \sigma \optrans \sigma'$ is either silent in the
abstract semantics (i.e., $\corresp{\sigma} \equiv \corresp{\sigma'}$) or
corresponds to a valid derivation step (i.e., $\corresp{\wprog} \rhd
\corresp{\sigma} \abstrans \corresp{\sigma'}$).

\vspace{-2mm}
\begin{lemma}[Correspondence Step]
\label{lem:corr_step}%
  For any program $\wprog$ and valid execution states $\sigma$, $\sigma'$, if $\wprog \rhd \sigma \optrans \sigma'$, then
  either $\corresp{\sigma} \equiv \corresp{\sigma'}$ or $\corresp{\wprog} \rhd \corresp{\sigma} \abstrans \corresp{\sigma'}$.
\end{lemma}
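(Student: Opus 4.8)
The plan is to proceed by case analysis on the operational semantics rule used to derive $\wprog \rhd \sigma \optrans \sigma'$, there being ten rules in total (seven in the core set of Figure~\ref{fig:op_sem_core} and three in the propagation set of Figure~\ref{fig:op_sem_prop}). For each rule I will compute $\corresp{\sigma}$ and $\corresp{\sigma'}$ by unfolding the definition of the correspondence relation in Figure~\ref{fig:corresp}, and show that they are either equal as multisets or related by a single abstract rule-application step. The administrative rules — $(\mt{lazy}\tx{-}\mt{act})$, $(\mt{eager}\tx{-}\mt{act})$, $(\mt{eager}\tx{-}\mt{drop})$, $(\mt{act}\tx{-}\mt{next})$, $(\mt{act}\tx{-}\mt{drop})$, $(\mt{act}\tx{-}\mt{prop})$, and $(\mt{prop}\tx{-}\mt{sat})$ — should all fall in the silent case: none of them moves any constraint into or out of $\lst$ in a way that changes its erasure, and none of them adds or removes an $\initgoal$ or $\lazygoal$ goal, so $\corresp{\sigma} \equiv \corresp{\sigma'}$ follows directly. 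The key observation making these work is that $\corresp{\eagergoal~\idcons{A}{n}} = \corresp{\actgoal~\idcons{A}{n}~i} = \corresp{\propgoal~\idcons{A}{n}~i~\Theta} = \msetempty$, so converting between these goal forms, or dropping them, is invisible; and for $(\mt{lazy}\tx{-}\mt{act})$ the constraint $A$ simply migrates from a $\lazygoal~A$ goal (which erases to $\mset{A}$) into the store $\lst$ (where $\idcons{A}{n}$ also erases to $\mset{A}$), keeping the total multiset fixed.

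The interesting cases are $(\mt{init})$, $(\mt{act}\tx{-}\mt{simpa}\tx{-}1)$, $(\mt{act}\tx{-}\mt{simpa}\tx{-}2)$, and $(\mt{prop}\tx{-}\mt{prop})$. For $(\mt{init})$, I will show silence: the goal $\initgoal~\mset{\bar{B}_l \msetcomma \bar{B}_e}$ erases (via $\corresp{\initgoal~\bar{B}} \Coloneqq \st$ with $\rhsjudge{\bar{B}}{\st}$) to some $\st_l \uplus \st_e$ with $\rhsjudge{\bar{B}_l}{\st_l}$ and $\rhsjudge{\bar{B}_e}{\st_e}$ — here I use that the body-unfolding judgment $\rhsjudge{\cdot}{\cdot}$ distributes over multiset union of rule constraints, which follows from rule $(\rhsirule{mset\tx{-}1})$ and determinacy of unfolding. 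On the right-hand side, $\st_e$ has moved into the store as $\lst_e$ (erasing back to $\st_e$) and $\st_l$ is spread across fresh $\lazygoal$ goals (each $\lazygoal~A$ erasing to $\mset{A}$, so the whole batch erases to $\st_l$), and the $\eagergoal$ goals contribute $\msetempty$; hence the total is unchanged. For the two $\mt{simpa}$ rules and $(\mt{prop}\tx{-}\mt{prop})$, I will exhibit a genuine abstract step: the side conditions of these operational rules are precisely the premises of the abstract rule-application rule in Figure~\ref{fig:rule_app_judgment}, modulo the $\dropidx{\cdot}$/$\droplabels{\cdot}$ liftings of the matching, residual-non-matching judgments established at the bottom of Figure~\ref{fig:op_sem_syntax}. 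Concretely, partitioning the store as $\mset{\lst \msetcomma \lst_p \msetcomma \lst_s \msetcomma \lst_a \msetcomma \idcons{A}{n}}$ and reading $\st_p := \droplabels{\mset{\lst_p}}$, $\st_s := \droplabels{\mset{\lst_s, \lst_a, \idcons{A}{n}}}$, and $\st := \droplabels{\lst}$ matches the shape $\mset{\st_p \msetcomma \st_s \msetcomma \st} \abstrans \mset{\st_p \msetcomma \st_b \msetcomma \st}$, where $\st_b = \corresp{\initgoal~\theta\bar{B}}$; one checks that $\matchjudge{\theta C}{\drop{\lst_a},\idcons{A}{n}}$ together with $\matchjudge{\theta\bar{H}_s}{\drop{\lst_s}}$ reassembles into $\matchjudge{\theta\bar{C}_s}{\st_s}$, and that the three $\existmatchjudge{}{}$ conditions against $\lst$ give $\existmatchjudge{\theta\mset{\bar{C}_p \msetcomma \bar{C}_s}}{\st}$ via the composition rules for residual non-matching.

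The main obstacle I anticipate is the bookkeeping in $(\mt{act}\tx{-}\mt{simpa}\tx{-}2)$ and $(\mt{prop}\tx{-}\mt{prop})$, where the active constraint $\idcons{A}{n}$ matches a pattern in the \emph{propagated} head rather than the simplified head: there the active constraint and partners $\lst_a, \lst_p$ stay in the store on the right, and additionally a copy of the body $\initgoal~\theta\bar{B}$ is pushed. I must check that the abstract semantics' rule-application rule accommodates this — it does, because matching $\theta\bar{C}_p$ against the fragment containing $\idcons{A}{n}$ leaves that fragment in both the pre- and post-state, and the erasure of the new $\initgoal~\theta\bar{B}$ goal is exactly the added $\st_b$. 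A secondary subtlety is that $(\mt{act}\tx{-}\mt{simpa}\tx{-}1)$ puts the body back as a goal $\initgoal~\theta\bar{B}$ rather than as stored constraints, so the correspondence must use $\corresp{\initgoal~\theta\bar{B}} = \st_b$ with $\rhsjudge{\theta\bar{B}}{\st_b}$ — consistent with the abstract rule's premise $\rhsjudge{\theta\bar{B}}{\st_b}$ — and I should note that any well-definedness concern about $\corresp{\initgoal~\bar{B}}$ (the "s.t.\ $\rhsjudge{\bar{B}}{\st}$" clause) is resolved by the determinacy of body unfolding, which I will state as a small auxiliary fact. Throughout, I will appeal to Lemma~\ref{lem:preserve} to know $\sigma$ (hence $\sigma'$) is valid, which in particular guarantees the $\lazygoal$ goals carry only monotone constraints and that no stale $\initgoal$ goal lurks below the top of the stack, keeping the erasure of $\gs$ well-behaved.
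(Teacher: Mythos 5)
Your overall structure matches the paper's: case analysis on the operational rules, with $(\mt{act}\tx{-}\mt{simpa}\tx{-}1)$, $(\mt{act}\tx{-}\mt{simpa}\tx{-}2)$ and $(\mt{prop}\tx{-}\mt{prop})$ yielding abstract steps and everything else (including $(\mt{init})$) silent, using Lemma~\ref{lem:preserve} for validity. However, there is a genuine gap in the abstract-step cases: you take the abstract context to be $\st := \droplabels{\lst}$, but $\corresp{\sigma}$ is \emph{not} just the erased labeled store --- it also contains $\corresp{\gs}$, i.e.\ the constraints still held in $\lazygoal$ goals on the stack (by validity there are no buried $\initgoal$ goals, but lazy goals may well be present). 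So the step $\mset{\st_p \msetcomma \st_s \msetcomma \droplabels{\lst}} \abstrans \mset{\st_p \msetcomma \st_b \msetcomma \droplabels{\lst}}$ you construct relates states that are in general strictly smaller than $\corresp{\sigma}$ and $\corresp{\sigma'}$, and enlarging the context is exactly where the difficulty lies: the operational side conditions check maximality ($\existmatchjudge{\cdot}{\lst}$) only against the labeled store, never against the goal-stack constraints, whereas the abstract rule demands $\existmatchjudge{\theta\mset{\bar{C}_p \msetcomma \bar{C}_s}}{\mset{\droplabels{\lst} \msetcomma \corresp{\gs}}}$. Since comprehension maximality is precisely what fails under naive store enlargement, this cannot be waved through.

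The missing idea --- and the one the paper's proof explicitly invokes --- is conditional monotonicity (Theorem~\ref{theo:cond_mono}): the validity invariant gives $\existunfprogjudge{\prog}{A}$ for every $\lazygoal~A \in \gs$, and hence the goal-stack fragment $\corresp{\gs}$ is monotone w.r.t.\ $\prog$, so the abstract step established on $\droplabels{\lst}$ can be lifted to the full store $\mset{\droplabels{\lst} \msetcomma \corresp{\gs}}$ (equivalently, $\existunfprogjudge{\prog}{\corresp{\gs}}$ yields $\existmatchjudge{M}{\corresp{\gs}}$ for every rule-head comprehension instance $M$, which is what the residual non-matching premise needs on that fragment). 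You do observe in passing that validity guarantees lazy goals carry only monotone constraints, but you attribute this to keeping the erasure of $\gs$ ``well-behaved'' rather than to discharging the maximality premise; as written, the claim that the operational $\existmatchjudge{}{}$ conditions ``give $\existmatchjudge{\theta\mset{\bar{C}_p \msetcomma \bar{C}_s}}{\st}$'' for the actual abstract context does not follow. Repair the abstract-step cases by either folding $\corresp{\gs}$ into the context via Theorem~\ref{theo:cond_mono}, or by directly deriving residual non-matching on $\corresp{\gs}$ from the validity invariant; the rest of your case analysis (including the $(\mt{init})$ silence argument and the propagated-head bookkeeping) is sound.
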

\vspace{-2mm}
\begin{proof}
  The proof proceeds by structural induction on all possible forms of
  derivation $\wprog \rhd \sigma \optrans \sigma'$.  Rules
  ($\mt{act}\tx{-}\mt{simpa}\tx{-}1$), ($\mt{act}\tx{-}\mt{simpa}\tx{-}2$) and
  ($\mt{prop}\tx{-}\mt{prop}$) correspond to abstract steps.  For them, we
  exploit conditional monotonicity in Theorem~\ref{theo:cond_mono} and
  preservation in Lemma~\ref{lem:preserve} to guarantee the validity of
  corresponding derivation step in the abstract semantics. All other rules are
  silent.  \qed%
\end{proof}

\vspace{-3mm}
\begin{theorem}[Soundness]
\label{theo:sound}%
For any program $\wprog$ and valid execution states $\sigma$ and $\sigma'$, if
$\wprog \rhd \sigma \optrans^* \sigma'$, then $\corresp{\wprog} \rhd
\corresp{\sigma} \abstrans^* \corresp{\sigma'}$.
\end{theorem}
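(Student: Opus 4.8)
The plan is to prove this by induction on the length of the derivation $\wprog \rhd \sigma \optransstar \sigma'$. The base case is zero steps, where $\sigma' = \sigma$, so $\corresp{\sigma'} \equiv \corresp{\sigma}$ and the abstract derivation $\corresp{\wprog} \rhd \corresp{\sigma} \abstransstar \corresp{\sigma'}$ holds by zero abstract steps. For the inductive step, suppose $\wprog \rhd \sigma \optransstar \sigma''$ in $k$ steps followed by a single step $\wprog \rhd \sigma'' \optrans \sigma'$. I would first note that $\sigma''$ is a valid state: this follows from Lemma~\ref{lem:preserve} (Preservation) applied $k$ times starting from the valid state $\sigma$. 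By the induction hypothesis, $\corresp{\wprog} \rhd \corresp{\sigma} \abstransstar \corresp{\sigma''}$.

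Next I would apply Lemma~\ref{lem:corr_step} (Correspondence Step) to the final step $\wprog \rhd \sigma'' \optrans \sigma'$, which is legitimate since $\sigma''$ is valid. This gives two cases. If $\corresp{\sigma''} \equiv \corresp{\sigma'}$, then the abstract derivation $\corresp{\wprog} \rhd \corresp{\sigma} \abstransstar \corresp{\sigma''}$ already witnesses $\corresp{\wprog} \rhd \corresp{\sigma} \abstransstar \corresp{\sigma'}$ after rewriting by this equivalence (no additional abstract step is needed). If instead $\corresp{\wprog} \rhd \corresp{\sigma''} \abstrans \corresp{\sigma'}$, then I would append this single abstract step to the derivation obtained from the induction hypothesis, yielding $\corresp{\wprog} \rhd \corresp{\sigma} \abstransstar \corresp{\sigma'}$. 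In both cases the conclusion follows, completing the induction.

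The main subtlety — already discharged by the cited lemmas — is that the operational semantics interleaves genuine rule applications with administrative book-keeping steps ($\mt{init}$, $\mt{lazy}\text{-}\mt{act}$, $\mt{eager}\text{-}\mt{act}$, $\mt{eager}\text{-}\mt{drop}$, $\mt{act}\text{-}\mt{next}$, $\mt{act}\text{-}\mt{drop}$, $\mt{act}\text{-}\mt{prop}$, $\mt{prop}\text{-}\mt{sat}$) that shuffle constraints between the goal stack and the labeled store without changing the abstract content; Lemma~\ref{lem:corr_step} certifies that precisely these are the silent steps, while $\mt{act}\text{-}\mt{simpa}\text{-}1$, $\mt{act}\text{-}\mt{simpa}\text{-}2$ and $\mt{prop}\text{-}\mt{prop}$ map to abstract rule applications (using conditional monotonicity, Theorem~\ref{theo:cond_mono}, to lift a match found against only part of the store to a match in the whole abstract store). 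Thus at the level of this theorem there is no real obstacle: the argument is the routine "lift a step lemma along a reflexive–transitive closure" pattern, with the only care needed being to thread the validity hypothesis through via Preservation so that the Correspondence Step lemma applies at each stage.
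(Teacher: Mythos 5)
Your proposal is correct and follows essentially the same route as the paper: induction on the number of operational steps, threading validity through Lemma~\ref{lem:preserve} so that Lemma~\ref{lem:corr_step} can be applied to each step, with silent steps absorbed and genuine steps appended to the abstract derivation. The paper's own proof is just a terse statement of this same argument, so your write-up merely spells out the details it leaves implicit.
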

\vspace{-3mm}
\begin{proof}
  The proof proceeds by induction on derivation steps.  The inductive case is
  proved using Lemmas~\ref{lem:corr_step} and~\ref{lem:preserve}.
\qed%
\end{proof}

While the operational semantics is sound, completeness w.r.t. the abstract
semantics does not hold.  There are two aspects of the operational semantics
that contributes to this: first the saturation behavior of propagation rules
(Figure~\ref{fig:op_sem_prop}) is not modeled in the abstract semantics. This
means that while a program $\wprog$ with a propagation rule terminates in the
operational semantics (thanks to saturation), $\corresp{\wprog}$ may diverge
in the abstract semantics. Second, although we can model negation with
comprehension patterns, we cannot guarantee completeness when we do so. For
instance, consider the rule $\mt{\ngmcompre{a(X)}{X \unidis Xs}
  \Longleftrightarrow Xs = \msetempty \mid noA}$, which adds $noA$ to the
constraint store if there are no occurrences of $\mt{a(X)}$ for any value of
$X$. The application of this rule solely depends on the absence of any
occurrences $\mt{a(X)}$ in the store.  Yet, in our operational semantics, rule
application is triggered only by the presence of constraints. The idea of
negated active constraint can be borrowed from~\cite{Weert06extendingchr} to
rectify this incompleteness, but space limitations prevent us from discussing
the details of this conservative extension to our operational semantics.

\mysection{Related Work}
\label{sec:related}

An extension of CHR with aggregates is proposed
in~\cite{DBLP:conf/iclp/SneyersWSD07}.  This extension allows the programmer
to write CHR rules with aggregate constraints that incrementally maintains
term-level aggregate computations. Differently from our comprehension
patterns, these aggregate constraints are only allowed to appear as
\emph{propagated} rule heads.  The authors of
\cite{DBLP:conf/iclp/SneyersWSD07} also suggested extending the refined CHR
operational semantics~\cite{Duck04therefined} with aggregates, in a manner
analogous to their previous work on CHR with negated
constraints~\cite{Weert06extendingchr}.  While both extensions (aggregates and
negated constraints) introduce non-monotonicity in the respective CHR
semantics, the observable symptoms (from an end-user's perspective) of this
non-monotonicity are described as ``unexpected behaviors''
in~\cite{Weert06extendingchr}, serving only as caveats for the programmers. No
clear solution is proposed at the level of the semantics.  By contrast, our work
here directly addresses the issue of incrementally processing of constraints
in the presence of non-monotonicity introduced by comprehension patterns.

The logic programming language Meld~\cite{cruz-damp12-short}, originally
designed to program cyber-physical systems, offers a rich range of features
including aggregates and a limited form of comprehension patterns.  To the
best of our knowledge, a low-level semantics on which to base an efficient
implementation of Meld has not yet been explored.  By contrast, our work
explores comprehension patterns in multiset rewriting rules in detail and
defines an operational semantics that is amenable to an incremental strategy
for processing constraints.

\mysection{Conclusion and Future Works}
\label{sec:conclude}

In this paper, we introduced $\chrcp$, an extension of CHR with multiset
comprehension patterns.  We defined an abstract semantics for $\chrcp$,
followed by an operational semantics and proved its soundness with respect to
the abstract semantics. We are currently developing a prototype implementation
based on the operational semantics discussed here.

In future work, we intend to further develop our prototype implementation of
$\chrcp$ by investigating efficient compilation schemes for comprehension
patterns. We also wish to explore alternatives to our current greedy
comprehension pattern matching scheme.  We also intend to extend $\chrcp$ with
some result form prior work in~\cite{ppdp13} and develop a
decentralized multiset rewriting language.  Finally, we believe that the
operational semantics of $\chrcp$ can be generalized to other non-monotonic
features along the lines
of~\cite{DBLP:conf/iclp/SneyersWSD07,Weert06extendingchr}.

\vspace{-4mm}
\bibliography{references}

\end{document}